\setlist{nosep}
\newcommand{\imped}{impedensable\xspace}
\newcommand{\Imped}{Impedensable\xspace}
\newcommand{\snote}[1]{{\color{black} #1}}
\definecolor{darkgreen}{rgb}{0.0,0.5,0.0}
\newcommand{\arya}[1]{{\color{black} #1}}
\newtheorem{definition}{Definition}
\newtheorem{algorithm}{Algorithm}
\newtheorem{example}{Example}
\newtheorem{theorem}{Theorem}
\newtheorem{lemma}{Lemma}
\title{Tolerance to Asynchrony in Algorithms for Multiplication and Modulo
\thanks{\textbf{To appear in Theoretical Computer Science.}}
\thanks{\textbf{A preliminary version of
this article
appeared in the 25th International Symposium on Stabilization, Safety, and Security of Distributed Systems (SSS 2023) \cite{Gupta2023}.
}}
\thanks{\textbf{This is an extended abstract of the SSS 2023 paper. To get the arXiv version of the SSS paper, please go to the immediately previous version uploaded on this arXiv repository (2302.07207v4, uploaded on 24 Jul 2023). This extended abstract is uploaded to this same arXiv repository due to the policies of arXiv on submissions with shared content.
}}
}
\author{Arya Tanmay Gupta and Sandeep S Kulkarni}
\date{Computer Science and Engineering, Michigan State University\\ \texttt{\{atgupta,sandeep\}@msu.edu}}
\begin{document}

\maketitle

\begin{abstract}

    In this article, we study some parallel processing algorithms for multiplication and modulo operations. We demonstrate that the state transitions that are formed under these algorithms satisfy lattice-linearity, where these algorithms induce a lattice among the global states. 
    Lattice-linearity implies that these algorithms can be implemented in asynchronous environments, where the nodes are allowed to read old information from each other. It means that these algorithms are guaranteed to converge correctly without any synchronization overhead. 
    These algorithms also exhibit snap-stabilizing properties, i.e., starting from an arbitrary state, the sequence of state transitions made by the system strictly follows its specification.

\end{abstract}

\textbf{\textit{Keywords}:} lattice-linear, modulo, multiplication, self-stabilization, asynchrony

\section{Introduction}

The development of parallel processing algorithms is increasingly gaining interest. This is because computing machines are manufactured with multiprocessor chips as we face a bound on the rate of architectural development of individual microprocessors. Such algorithms, in general, require synchronization among their processing nodes. Without such synchronization, the nodes perform executions based on inconsistent values possibly resulting in substantial delay or potentially incorrect computation.

Consider the problem of graph colouring for example. An algorithm for this problem can be designed as follows. In each step, a node reads the colour of all its neighbours and, if necessary, updates its own colour. Let us consider that we allow this algorithm to execute in asynchrony: consider that there are two nodes, $i$ and $j$, in the input graph, connected to each other by an edge, and they execute asynchronously. Let us suppose that $i$ reads the state of $j$ and changes its state. The action of $i$ is acceptable till now. 
However, if $j$ reads the value of $i$ concurrently, it is relying on inconsistent information about $i$. This is because $i$ will have changed its state in its immediate next step. In other words, $j$ is relying on old information about $i$. Such behaviour is prevented by the use of apt synchronization primitives. However, they introduce considerable overhead in terms of time and resources.

If an algorithm guarantees correctness without synchronization, then the overheads of the otherwise required time and resources can be eliminated. In the context of algorithms that allow asynchrony, lattice theory has provided very useful concepts. Lattice-linear systems induce a lattice structure in the state space, which allows the nodes to perform executions asynchronously. 
{
Lattice-linearity was utilized in modelling the problems (called \textit{lattice-linear problems} \cite{Garg2020}), and in developing algorithms (called \textit{lattice-linear algorithms}).
In a lattice-linear problem, nodes in violating local states can be determined within the natural constraints of the problem.
Lattice-linear algorithms impose a lattice structure in problems, which may or may not be lattice-linear \cite{Gupta2021,Gupta2022}.
}
Thus, the algorithms that traverse such a state transition system can allow the nodes to perform executions asynchronously while preserving correctness. 

Lattice-linearity allows inducing single or multiple lattices among the global states. If self-stabilization is required, then for every induced lattice, its supremum must be an optimal state. This way, it can be ensured that the system can traverse to an optimal state from an arbitrary state.
We introduced eventually lattice-linear algorithms \cite{Gupta2021} and fully lattice-linear algorithms \cite{Gupta2022} for non-lattice-linear problems (we discuss more on this in \Cref{section:preleminaries} for the sake of completeness, but non-lattice-linear problems are not the focus of this article). These algorithms are self-stabilizing.

Many lattice-linear problems 
do not allow self-stabilization \cite{Garg2020,Garg2021,Garg2022}.
While developing new algorithms for lattice-linear and non-lattice-linear problems is very interesting, it is also worthwhile to study if algorithms already present in the literature exploit lattice-linearity of problems or if lattice-linearity is present in existing algorithms.
For example, lattice-linearity was found to be exploited by Johnson's algorithm for shortest paths in graphs \cite{Garg2020} and by Gale’s top trading cycle algorithm for {the housing market problem} \cite{Garg2021}.

In this article, we study parallel implementations of two fundamental operations in mathematics: multiplication and modulo. We study algorithms for $n\times m$ and $n \mod m$ where $n$ and $m$ are large integers represented as binary strings.

The applications of integer multiplication include the computation of power, matrix products which has applications in a plethora of fields including artificial intelligence and game theory, the sum of fractions and coprime base.
Modular arithmetic has applications in theoretical mathematics, where it is heavily used in number theory and various topics (groups, rings, fields, knots, etc.) in abstract algebra. 
Modulo also has applications in applied mathematics, where it is used in computer algebra, cryptography, chemistry and the visual and musical arts. In many of these applications, the value of the divisor is fixed.

A crucial observation is that these operations are lattice-linear. Also, the algorithms that we study in this article are capable of computing the correct answer even if the nodes are initialized in an arbitrary state, i.e., the algorithms that we study in this article are self-stabilizing. These properties are present in these operations as opposed to many other lattice-linear problems where the lattice structure does not allow self-stabilization.

\subsection{Contributions of the article }

\begin{itemize}
    \item We study the lattice-linearity of modulo and multiplication operations.
    The design and woking of the algorithms imply that these problems satisfy the contstraints of lattice-linear problems (lattice linear problems are studied in \cite{Garg2020}, however, they were first formally classified (against non-lattice-linear problems) in \cite{Gupta2023a}). 
    \item We also show that these algorithms exhibit properties that are similar to snap-stabilizing algorithms \cite{Bui1999}, i.e., starting from an arbitrary state, the sequence of state transitions made by the system strictly follows its specification, i.e., initializing in an arbitrary state, the nodes immediately start obtaining the values as expected with each action they execute.
    \item We observe that both multiplication and modulo manifest multiple lattice structures that can be exploited with different numbers of computing nodes. It means that depending on the number of processors available, we can choose which algorithm to use. Modulo and multiplication are the first lattice-linear problems for which multiple lattice structures are studied.
\end{itemize}

\subsection{Organization of the article }

\noindent In \Cref{section:preleminaries}, we describe the definitions and notations that we use in this article. We study lattice-linearity of the multiplication operation in \Cref{section:mul-parallel}.
Then, in \Cref{section:mod-parallel}, we study the lattice-linearity of the modulo operation. 
\Cref{section:discussion} discusses the properties commonly present in all the algorithms that we study in this article.
In \Cref{section:literature}, we discuss the related work.
Finally, we conclude in \Cref{section:conclusion}.

\section{Preliminaries}\label{section:preleminaries}

This article focuses on multiplication and modulo operations, where the operands are $n$ and $m$. In the computation $n\times m$ or $n\mod m$, $n$ and $m$ are the values of these numbers respectively, and $|n|$ and $|m|$ are the length of the bitstrings required to represent $n$ and $m$ respectively.
\footnote{{Since $n$ and $m$ are sequence of bit-values and if $x$ is a sequence or a set, $|x|$ is used to denote the number of elements in $x$, $|n|$ and $|m|$ are the lengths of these bitstrings respectively. This notation does not represent the magnitude of their values.}}
If $n$ is a bitstring, then $n[k]$ is the $k^{th}$ bit of $n$ (indices start from 1). For a bitstring $n$, $n[1]$ is the most significant bit of $n$ and $n[|n|]$ is the least significant bit of $n$. We use $n[j:k]$ to denote the bitstring from $j^{th}$ bit to $k^{th}$ bit of $n$; this includes $n[i]$ and $n[j]$. For simplicity, we stipulate that $n$ and $m$ are of lengths in some power of $2$.
Since size of $n$ and $m$ may be substantially different, we provide complexity results that are of the form $O(f(n,m))$ in all cases, where $f$ is a function of $n$ and $m$.

We use the following string operations: (1) $append(a$, $b)$, appends $b$ to the end of $a$ in $O(|b|)$ time, (2) $rshift(a, k)$, deletes rightmost $k$ bits of $a$ in $O(k)$ time, and (3)  $lshift(a, k)$, appends $k$ zeros to the right of $a$ in $O(k)$ time. 

{
In several places, we have used the functions $\textsc{Mod}(x,y)$, $\textsc{Mul}(x,y)$, and $\textsc{Sum}(x,y)$. These functions, respectively, compute $x\mod y, x\times y$ and $x+y$.
}

\subsection{Modeling Distributed Algorithms}

$n \times m$ or $n \mod m$ are typically thought of as arithmetic operations. However,  when $n$ and $m$ are large, we view them as algorithms. In this article, we view them as parallel/distributed algorithms where the nodes collectively perform computations to converge to final output.

A parallel/distributed algorithm consists of multiple nodes where each node is associated with a set of variables. A \textit{global state}, say $s$, is obtained by assigning each variable of each node a value from its respective domain. $s$ is represented as a vector, where $s[i]$ itself is a vector of the variables of node $i$.
$S$ denotes the \textit{state space}, which is the set of all global states that a given system can obtain. 

Each node is associated with rules. Each rule at node $i$ checks the values of other nodes and updates its own variables. An \textit{rule} is of the form $g\longrightarrow a_c$, where $g$ is the \textit{guard} (a Boolean predicate involving variables of $i$ and other nodes) and \textit{action} $a_c$ is an instruction that updates the variables of $i$.
We assume all actions to be executed atomically.
{If at least one guard of a node is true, then that node is \textit{enabled}, otherwise it is \textit{disabled}.}

An algorithm $A$ is \textit{self-stabilizing} with respect to the subset $S_o$ of $S$ where $S$ is the set of all global states, iff (1) \textit{convergence:} starting from an arbitrary state, any computation of $A$ reaches a state in $S_o$, and (2) \textit{closure:} any computation of $A$ starting from $S_o$ always stays in $S_o$. 
We assume $S_o$ to be the set of optimal states: the system is deemed converged once it reaches a state in $S_o$. 
{$A$ is \textit{silent} if all guards are false once the system reaches a state in $S_o$, and so no node makes a move.}
An algorithm is \textit{snap-stabilizing} iff starting from an arbitrary state, it makes the system follow a sequence of state transitions as per the specification of that system.

\snote{
The term \textit{time complexity} defines the time taken by an algorithm to reach an optimal state. 
\textit{Work complexity} refers to the total work done by the algorithm, i.e., work complexity is the time it would have taken if all the work in the algorithm was done sequentially.
Since a parallel algorithm can complete multiple tasks at the same time, time complexity can be lower than work complexity. 
As an example, if an algorithm executes actions 1, 2, and 3, each taking 1 time unit, such that actions 1 and 2 are executed concurrently and action 3 is executed only after they finish, then the time complexity is 2 units and the work complexity is 3 units. 
}

{
In several algorithms, a tree is induced among the computing nodes due to the flow of data. In such a tree, we denote $level$ to be the marker of the distance of the nodes from the root node. E.g., the root node is the topmost node, so it is at the highest level, and the nodes that are farthest from the root node are at level 1.
}

\subsection{Execution without Synchronization}
Typically, we view the computation of an algorithm as a sequence of global states $\langle s_0, s_1, \cdots\rangle$ where $s_{t+1}, t\geq 0,$ is obtained by executing some action by one or more nodes in $s_t$. 
Under proper synchronization, node $i$ evaluates its guards on the \textit{current} local states of its neighbours. 

For the sake of discussion, assume that only node $i$ is executing in state $s_t$. Let  $s_{t+1}$ be the state obtained by executing $g \longrightarrow a_c$ (of node $i$) in $s_t$, under the required synchronization.
To understand how execution without synchronization works, let $x(s)$ denote the value of some variable $x$ in state $s$.
If $g$ is evaluated without synchronization, then node $i$ may read old values of some variables.
Let the computation prefix til $s_t$ be $\langle s_0, s_1, \cdots, s_t\rangle$. 
Then, the state may appear to $i$ as $s'\neq s_t$, where some values are old. Specifically, the state $s_t$ will appear as $s'$ where $x(s')\in\{ x(s_0), x(s_1), \cdots, x(s_t) \}.$ If $g$ evaluates to false, then the next state will stay at $s_t$, which may be incorrect. If, otherwise, $g$ evaluates to true then node $i$ will execute $a_c$, and the resulting state $s''$ may not be the same as $s_{t+1}$, where (1) the state of $i$, $s''[i]$, is evaluated on the values of variables in $s'$ (and not $s_t$), and (2) $\forall j\neq i:$ the state of $j$ will be unchanged, i.e., $s''[j] = s_t[j]$.

The model described in the above paragraph is \textit{arbitrary asynchrony}, a model in which a node can read old values of other nodes arbitrarily, requiring that if some information is sent from a node, it eventually reaches the target node.
In this article, however, we are interested in \textit{asynchrony with monotonous read} (AMR) model. 
AMR model is arbitrary asynchrony with an additional restriction: when node $i$ reads the state of node $j$, the reads are monotonic, i.e., if $i$ reads a newer value of the state of $j$ then it cannot read an older value of $j$ at a later time. For example, if the state of $j$ changes from $0$ to $1$ to $2$ and node $i$ reads the state of $j$ to be $1$ then its subsequent read will either return $1$ or $2$, it cannot return $0$.  

\subsection{Embedding a $\prec$-lattice in Global States}

Let $s$ denote a global state, and let $s[i]$ denote the state of node $i$ in $s$. First, we define a total order $\prec_l$; all local states of a node $i$ are totally ordered under $\prec_l$. 
Using $\prec_l$, we define a partial order $\prec_g$ among global states as follows: 

We say that $s \prec_g s^\prime$ iff $(\forall i: s[i]=s'[i]\lor s[i]\prec_l s'[i]) \land (\exists i:s[i]\prec_ls'[i])$.
Also, $s=s'$ iff $\forall i:s[i] = s'[i]$. 
For brevity, we use $\prec$ to denote $\prec_l$ and $\prec_g$: $\prec$ corresponds to $\prec_l$ while comparing local states, and $\prec$ corresponds to $\prec_g$ while comparing global states. 
We also use the symbol `$\succ$' which is such that $s\succ s'$ iff $s' \prec s$.
Similarly, we use symbols `$\preceq$' and `$\succeq$'; e.g., $s\preceq s'$ iff  $s=s' \lor s \prec s'$.
We call the lattice, formed from such partial order, a \textit{$\prec$-lattice}.

\begin{definition}\label{definition:<-lattice}
    \textbf{{\boldmath$\prec$}-\textit{lattice}}. 
    Given a total relation $\prec_l$ that orders the states visited by $i$ (for each $i$) the $\prec$-lattice corresponding to $\prec_l$ is defined by the following partial order:
    $s \prec s'$ iff $(\forall i: s[i] \preceq_l s'[i]) \wedge (\exists i: s[i] \prec_l s'[i])$.
\end{definition}

A $\prec$-lattice constraints how global states can transition among one another: state $s$ can transition to state $s'$ iff $s\prec s'$. In the $\prec$-lattice discussed above, we can define the meet and join of two states in the standard way: the meet (respectively, join), of two states $s_1$ and $s_2$ is a state $s_3$ where $\forall i: s_3[i]$ is equal to $min(s_1[i], s_2[i])$ (respectively, $max(s_1[i], s_2[i])$), 
{where for a pair of local states $x$ and $y$, if $x\prec_l y$, then $\min(x, y) = \min(y, x)=x$ and 
$\max(x, y) = \max(y, x)=y$.
}
For $s_1$ and $s_2$, their meet (respectively, join) has a path to (respectively, is reachable from) both $s_1$ and $s_2$.

By varying $\prec_l$ that identifies a total order among the states of a node, one can obtain different lattices. A $\prec$-lattice, embedded in the state space, is useful for permitting the algorithm to execute asynchronously.
Under proper constraints on the structure of $\prec$-lattice, convergence can be ensured. 

\subsection{Lattice-Linear Problems}
Next, we discuss \textit{lattice-linear problems}, i.e., the problems where the problem statement creates the lattice structure automatically. Specifically, in such problems, in any given suboptimal global state, all the nodes that are in a bad local state can be identified; such nodes must change their state in order for the system to reach an optimal state. Lattice-linear problems can be represented by a predicate that induces a lattice among the states in $S$. 

In \textit{lattice-linear problems},
a problem $P$ can be represented by a predicate $\mathcal{P}$ such that for any node $i$, if it is violating $\mathcal{P}$ in some state $s$, then it must change its state. Otherwise, the system will not satisfy $\mathcal{P}$.
Let $\mathcal{P}(s)$ be true iff state $s$ satisfies $\mathcal{P}$. A node violating $\mathcal{P}$ in $s$ is called an \textit{\imped} node
\footnote{\arya{We use the term `impedensable' as a combination of the English words impediment and indispensable.
The term `\imped' is similar to the notion of a node being \textit{forbidden} introduced in \cite{Garg2020}. This word itself comes from predicate detection background \cite{Chase1995}. 
We changed the notation to avoid the misinterpretation of the English meaning of the word `forbidden'.}}
(an \textit{impediment} to progress if does not execute, \textit{indispensable} to execute for progress). Formally,

\begin{definition}\label{definition:impedensable-node}\cite{Garg2020} \textbf{\textit{\Imped node.}} $\textsc{\Imped}(i,s,\mathcal{P})\equiv \lnot \mathcal{P}(s)$ $\land$ $(\forall s'\succ s:s'[i]=s[i]\Rightarrow\lnot \mathcal{P}(s'))$. \end{definition}

\Cref{definition:impedensable-node} implies that if a node $i$ is \imped in state $s$, then in any state $s'$ such that $s'\succ s$, if the state of $i$ remains the same, then the algorithm will not converge. By definition, a node should also not come back to a local state that it already rejected, {because that rejected state was not viable for any optimal global state, so otherwise, an optimal global state cannot be achieved}.
Thus, $\mathcal{P}$ induces a total order among the local states visited by a node, for all nodes. Consequently, the discrete structure that gets induced among the global states is a $\prec$-lattice, as described in \Cref{definition:<-lattice}. 
Thus, any $\prec$-lattice among the global states is induced by a predicate $\mathcal{P}$ that satisfies \Cref{definition:impedensable-node}.

There can be multiple arbitrary lattices that can be induced among the global states {by changing several aspects of a system, e.g., by changing the predicate}. A system cannot guarantee convergence while traversing an arbitrary lattice. To resolve this, we design the predicate $\mathcal{P}$ such that {it follows \Cref{definition:impedensable-node}}.
Thus, $\mathcal{P}$ induces a $\prec_l$ relation among the local states, and as a result, a $\prec$-lattice among the global states.
We say that $\mathcal{P}$ is \textit{lattice-linear} with respect to that $\prec$-lattice.
Consequently, in any suboptimal global state, there will be at least one \imped node. Formally,

\begin{definition}\cite{Garg2020}\textbf{\textit{Lattice-Linear Predicate.}}
    $\mathcal{P}$ is a lattice-linear predicate with respect to a $\prec$-lattice induced among the global states iff $\forall s\in S: \lnot\mathcal{P}(s) \Rightarrow \exists i:\textsc{\Imped}(i,s,\mathcal{P})$.
\end{definition}

Now we complete the definition of lattice-linear problems. In a lattice-linear problem $P$, given any suboptimal global state, we can identify all and only nodes which cannot retain their state{, i.e., nodes which are \imped}. In this article, we observe that the algorithms that we study exploit this nature of their respective problems. $\mathcal{P}$ is thus designed conserving this nature of the subject problem $P$.

\begin{definition}\label{definition:ll-problem}
    \textbf{Lattice-linear problems}.
    A problem $P$ is lattice-linear 
    iff there exists a predicate $\mathcal{P}$ and a $\prec$-lattice such that
    
    \begin{itemize}
        \item $P$ is deemed solved iff the system reaches a state where $\mathcal{P}$ is true,
        \item $\mathcal{P}$ is lattice-linear with respect to the $\prec$-lattice induced in $S$, i.e., $\forall s: \neg \mathcal{P}(s) \Rightarrow \exists i:\textsc{\Imped}(i,s,\mathcal{P})$.
        \item $\forall s:(\forall i:\textsc{\Imped}(i,s,\mathcal{P})\Rightarrow (\forall s':\mathcal{P}(s')\Rightarrow s'[i]\neq s[i]))$.
    \end{itemize}
\end{definition}

Problems such as stable marriage, job scheduling and market clearing price, as studied in \cite{Garg2020}, are lattice-linear problems.
In this article, we study lattice structures that can be induced in multiplication and modulo: we show that multiplication and modulo are lattice-linear problems. All the lattice structures that we study in this article allow self-stabilization, i.e., the supremum of the lattice induced in the state space is the optimal state.

\begin{definition}\label{definition:ssll-problem}
    \textbf{Self-stabilizing lattice-linear predicate}.
    Continuing from \Cref{definition:ll-problem},
    $\mathcal{P}$ is a self-stabilizing lattice-linear predicate if and only if the supremum of the lattice that $\mathcal{P}$ induces is an optimal state, i.e., $\mathcal{P}(supremum(S))=true$.
\end{definition}

\noindent Note that $\mathcal{P}$ can also be true in states other than the supremum of the $\prec$-lattice. 

\noindent\textbf{\textit{Remark:}} A $\prec$-lattice, induced under $\mathcal{P}$, allows asynchrony because if a node $i$, reading old values, reads the current state as $s$, then for the current state $s'$, $s\prec s'$. Thus, if $i$ evaluates that it is impedensable in $s$, then $\lnot\mathcal{P}(s)\Rightarrow \lnot\mathcal{P}(s')$ because $s[i]=s'[i]$.

\subsection{Modulo: some classic sequential models}\label{section:mod-sequential}

In this subsection, we discuss some sequential algorithms for computing $n \mod m$. We will utilize these preliminary algorithms to analyze the effective time complexity of parallelized modulo operation. We consider two instances, one where both $n$ and $m$ are inputs and another where $n$ is an input but $m$ is hardcoded. We utilize these algorithms in \Cref{section:parallel-modulo}.

The latter algorithm is motivated by algorithms such as RSA \cite{Rivest1978} where the value of $n$ changes based on the message to be encrypted/decrypted, but the value of $m$ is fixed once the keys are determined. Thus, some pre-processing can potentially improve the performance of the modulo operation;
we observe that certain optimizations are possible. While the time and space complexities required for preprocessing in this algorithm are high, thereby making it impractical, it demonstrates a gap between the lower and upper bound in the complexity. 

\subsubsection{Modulo by Long Division}\label{subsection:long-division}
~\\
The standard long-division algorithm to compute $n$ mod $m$ is shown below. 
\begin{center}
    \small 
    \noindent\begin{tabular}{|l l|}
        \hline 
         & \textsc{Division-Modulo}($n,m$):\\
        1. & \quad $ans = n[1:|m|-1]$. $k = |m|$.\\
        2. & \quad \textbf{while} $(k \leq |n|)$, \\
        3. & \quad \quad $ans=append(ans,n[k])$.\\
        4. & \quad \quad \textbf{if} $ans>m$, then\\
        5. & \quad \quad \quad $ans=ans-m$. k=k+1.\\
        6. & \quad Return $ans$.\\
        \hline 
    \end{tabular}
\end{center}

Clearly, the number of iterations in this algorithm is bounded by $|n|$. In each of these iterations, the worst case complexity is $O(|m|)$ to perform the subtraction operation. Thus, the time complexity of standard long division is $O(|n|\times |m|)$ when $m$ and $n$ are both inputs to the algorithm. 

\subsubsection{Modulo by constructing DFA }\label{subsection:dfa-modulo}
~\\
If the value of $m$ is hardcoded in the algorithm, we use it to reduce the cost of the modulo operation by creating a deterministic finite automaton (DFA) $M=\langle Q$, $\Sigma$, $\delta$, $q_0\rangle$, where (1) $Q=\{q_0..q_{m-1}\}$ is the set of all possible states of $M$, (2) $\Sigma=\{0,1\}$ is the alphabet set, (3) $\delta$ is the transition function, the details of which we study in this section, and (4) $q_0$ is the initial state. If $M$ has read the first $k$ digits of $n$ then its state would be $n[1:k] \mod m$. When $M$ reads the next digit of $n$, $n[1:(k+1)] \mod m$ is evaluated depending on the next digit. If the next digit is 0, the next state will be $(2\times (n[1:k] \mod m)) \mod m$, otherwise, it will be  $(2\times (n[1:k] \mod m)+1) \mod m$.
Since the value of $m$ is hardcoded, this DFA is assumed to be pre-computed. 
As an example, for $m=3$, the corresponding DFA is provided in \Cref{example:mod3}.

\begin{example}\label{example:mod3}
    A finite automaton $M_3$ computing $n\mod 3$ ($m=3$ is fixed) for any $n\in \mathbb{N}$ is shown in \Cref{figure:mod3-dfa}.
    \qed 
\end{example}

\begin{figure}[ht]
    \centering
    \begin{tikzpicture}
        \node [circle, draw=black] at (0,0) (q0) {$q_0$};
        \node [circle, draw=black] at (2,0) (q1) {$q_1$};
        \node [circle, draw=black] at (4,0) (q2) {$q_2$};
        
        \draw[->] (q0.north west) to [out=135,in=225,looseness=5] (q0.south west);
        
        \draw[->,thick] (0,1) -- (0,.5);
        
        \draw[->] (q0.north east) to [out=45, in=135] (q1.north west);
        \draw[->] (q1.south west) to [out=225, in=315] (q0.south east);
        
        
        \draw[->] (q1.north east) to [out=45, in=135] (q2.north west);
        \draw[->] (q2.south west) to [out=225, in=315] (q1.south east);
        
        \draw[->] (q2.north east) to [out=45,in=315,looseness=5] (q2.south east);
        
        \node at (1,1) {1};
        \node at (1,-1) {1};
        \node at (-1,0) {0};
        
        \node at (3,1) {0};
        \node at (3,-1) {0};
        
        \node at (5,0) {1};
    \end{tikzpicture}
    \caption{A finite automaton $M_3$ computing $n\mod 3$ for any $n\in \mathbb{N}$.}
    \label{figure:mod3-dfa}
\end{figure}

With this DFA, the cost of computing the modulo operation corresponds to one DFA transition for each digit of $n$. Hence, the complexity of the corresponding operation is $O(|n|)$.
$M$ does not have any accepting states, which is unlike a usual finite automaton; the final state of $M$ only tells us the value of the remainder which we would obtain after the computation of the expression $n\mod m$.

We define the construction of the transition function $\delta$ of $M$ in \Cref{figure:dfa-modulo}.
$\delta$ is constructed using the magnitude of $m$, and it is capable of reading a bitsting $n$ starting from $n[1]$ and traversing through $n[|n|]$, reading every bit, sequentially, in each step. The problem is as follows: let that $M$ has read the first $z$ bits of $n$ that evaluates to the value $K$ (here, $K=n[1:z]$), and let $K\mod m=k$, so $M$ is in state $q_k$. From here, we have to determine the next state based on whether the next bit $M$ reads is 0 or 1, which would mean that the total value read will be $2\times K+0$ or $2\times K+1$.

We start by assigning edges from $q_0$. $(2\times0+0) \mod m$ is 0 and $(2\times 0+1)\mod m$ is 1, for example. So $\delta(q_0,0)=0$ and $\delta(q_0,1)=1$. So we assign $q_0$ to transition to $q_0$ on input 0, and to $q_1$ on input 1. This method induces labelled edges between the states of $M$, such that those labels define which edge should be traversed based on what bit is read. After reading the first bit, assuming that $M$ has read the value $K$ so far and $M$ is in state $q_k$, we assign the next state to be $(2\times k+0)\mod m$ and $(2\times k+1)\mod m$ for inputs 0 and 1 respectively. This is because $(2\times K+0)\mod m$ and $(2\times K+1)\mod m$ will be equal to $(2\times k+0)\mod m$ and $(2\times k+1)\mod m$ respectively. Note that if $2\times k+0$ (resp., $2\times k+1$) is greater than $m$, then we assign $(2\times k+0)\mod m$ (resp., $(2\times k+1)\mod m$) to be $(2\times k+0)-m$ (resp., $(2\times k+1)-m$) as $2\times k+0$ (and $2\times k+1$) cannot exceed $2\times m$.

\begin{figure}[ht]
    \centering
    \begin{tabular}{|l l|}
        \hline 
         & Elaborated definition of $\delta$.\\
         & \\
        1. & $v=0$. $m_{rs}=rshift(m, 1)$.\\
         & \quad $b=$ last bit of $m$. $i=0$.\\
        2. & \textbf{for}( ; $i< m_{rs}$; $i=i+1$),\\
        3. & \quad $\delta(q_i,0)=q_v$. $v= v+1$.\\
        4. & \quad $\delta(q_i,1)=q_v$. $v= v+1$.\\
        5. & \textbf{if} $b=1$, then\\
        6. & \quad {$\delta(q_i,0)=q_v$. $\delta(q_i,1)=q_0$. $v=1$.}\\
        7. & \textbf{else}, then\\
        8. & \quad {$\delta(q_i,0)=q_0$. $\delta(q_i,1)=q_1$. $v=2$.}\\
        9. & $i=i+1$.\\
        10. & \textbf{for}( ; $i< m$; $i=i+1$),\\
        11. & \quad $\delta(q_i,0)=q_v$. $v= v+1$.\\
        12. & \quad $\delta(q_i,1)=q_v$. $v= v+1$.\\
        13. & {$\delta(q_i,0)=q_0$. $\delta(q_i,1)=q_1$.}\\
        \hline 
    \end{tabular}
    \caption{Definition of the transition function $\delta$.}
    \label{figure:dfa-modulo}
\end{figure}

It can be clearly observed that while this approach takes $|n|$ steps, each step taking a constant amount of time, the time complexity (as well as the space complexity) of the required preprocessing is $O(m\times |m|)$, which is very high. Therefore, this approach is not practical when $m$ is large. However, we consider it to observe that we obtain a time complexity of $O(|n|)$ to run this automaton to evaluate for the modulo operation if $m$ can be hardcoded. 

\section{Parallelized Multiplication Operation}\label{section:mul-parallel}

In this section, we demonstrate that the parallelized version of the standard multiplication algorithm as well as a parallelized version of Karatsuba's \cite{Karatsuba1962} algorithm presented in \cite{Cesari1996} meet the requirements of lattice-linearity, i.e. a system of nodes traverses a lattice of global states and provide the final output. 
We consider the problem where we want to compute $n\times m$.

\subsection{Parallelizing Standard Multiplication}\label{subsection:multiplicaiton-standard}

In this subsection, we present the parallelization of the standard multiplication algorithm. First, we discuss the key idea of the sequential algorithm, then we elaborate on the lattice-linearity of its parallelization.

\subsubsection{Key Idea}

In the standard multiplication, we multiply $m$ with one digit of $n$ at a time (for each digit of $n$), and then add all these multiplications, after left shifting those resultant strings appropriately. Suppose that we have two strings $a=n[1:\lfloor |n|/2\rfloor]\times m$ and $b=n[\lfloor |n|/2\rfloor+1:|n|]\times m$. Then the resultant multiplication $n\times m$ will be equal to $lshift(a,|n|-\lceil |n|/2\rceil)+b$.

\subsubsection{Parallelization}

This algorithm requires $2\times |n|-1$ nodes, and induces a binary tree among them. The root of the tree is marked as node $1$ and any node $i$ ($1\leq i\leq |n|-1$) has two children: node $2i$ and node $2i+1$.

In this algorithm, every node stores two variables: $shift$ and $ans$. We demonstrate that the computation of each of these variables is lattice-linear. \\~

\noindent \textbf{Computation of {\boldmath $i[shift]$}}: At the lowest level (level 1), the value of $shift$ is set to $0$. 
Consequently, at the next level (level 2), $shift$ is set to $1$. At all higher levels, $shift$ of any node is computed to be twice the value of $shift$ of its children, i.e., $i[shift]$ is set to $2\times (2i)[shift]$. This can be viewed as a lattice-linear computation where a node is \imped iff the following condition is satisfied. A \imped node updates its value to be either $0$ (at level 1), $1$ (at level 2), or $2\times (2i)[shift]$ (at level 3 and higher).  
{
$$\begin{array}{c}
\textsc{\Imped-Multiplication-Standard-Shift}(i)\equiv\\
    \begin{cases}
        i[shift]\neq 0 & \text{if $i\geq|n|$}\\
        i[shift]\neq 1 & \text{if $(2i)[shift]=(2i+1)[shift]=0$}\\
        i[shift]\neq 2\times (2i)[shift] & \text{if $(2i)[shift]=(2i+1)[shift]\geq 1$}
    \end{cases}
\end{array}
$$
}

\noindent\textbf{Computation of {\boldmath$i[ans]$}}: The value of $ans$ at the lowest level (level 1) is set to be the corresponding bit of $n$. At level 2, $i[ans]$ is computed to be equal to the bitstring stored in $ans$ of left child (left-shifted by $i[shift]$ bits (by 1 bit)) multiplied with $m$, added to the bitstring stored in $ans$ of right child multiplied with $m$. At every level above level 2, $i[ans]$ is set by left shifting $(2i)[ans]$ by the $i[shift]$, and then adding $(2i+1)[ans]$ to that value. Thus, to propagate the value of $ans$ among the nodes correctly, we declare them to be \imped as follows. 
{
\noindent$$\begin{array}{c}
\textsc{\Imped-Multiplication-Standard-Ans}(i)\equiv\\
    \begin{cases}
        i[ans]\neq n[i-|n|+1] & \quad \quad \text{if $i\geq|n|$.}\\
        i[ans]\neq lshift(((2i)[ans]\times m)$, $i[shift])\\ \quad \quad +((2i+1)[ans]\times m) & \quad \quad \text{if $(2i)[shift]=(2i+1)[shift]=0$.}\\
        i[ans]\neq lshift((2i)[ans]$, $i[shift])+(2i+1)[ans] & \quad \quad \text{if $(2i)[shift]=(2i+1)[shift]\geq 1$.}
    \end{cases}
\end{array}
$$
}

We observe that determining $\textsc{\Imped-Multiplication-Standard-Ans}(i)$ is more complex than determining $\textsc{\Imped-Multiplication-Standard-Shift}(i)$. However, we can eliminate it by observing that it suffices to update $ans$ when $shift$ is updated. This requires that $i[shift]$ and $i[ans]$ are updated at a node $i$ atomically in a single step. In that case, we can view the algorithm as \Cref{algorithm:mul-normal-tree}.

\begin{algorithm}Parallelized standard multiplication algorithm.\label{algorithm:mul-normal-tree}
\end{algorithm}
$
\begin{array}{|l|}
    \hline 
    \text{\textit{Rules for node $i$}}.\\~\\
    \textsc{\Imped-Multiplication-Standard-Ans}(i)\longrightarrow\\
    \begin{array}{l}
        \begin{cases} 
            i[shift]=0,i[ans] = n[i-|n|+1]. & \text{if $i\geq|n|$.}\\
            i[shift]=1,\\ \quad \quad i[ans]=lshift(((2i)[ans]\times m)$, $1)+((2i+1)[ans]\times m). & \text{if $(2i)[shift]=0$.}\\
            i[shift]=2\times (2i)[shift],\\ \quad \quad i[ans]=lshift((2i)[ans]$, $i[shift])+(2i+1)[ans]. &  \text{otherwise.}
        \end{cases}
    \end{array}~\\
    \hline
\end{array}
$

From the above description, we see that the standard multiplication algorithm satisfies the constraints of lattice-linearity, {which we prove in the following part of this subsection}. This algorithm executes in $O(|m|\times\lg |n|)$ time. Its work complexity is $O(|n|\times |m|)$, which is same as the time complexity of the standard multiplication algorithm. The example below demonstrates the working of \Cref{algorithm:mul-normal-tree}.

\begin{example}
    In \Cref{figure:11011times101}, we demonstrate the multiplication of the bitstrings 00011011 (value of $n$) and 0101 (value of $m$) under  \Cref{algorithm:mul-normal-tree}. 
    \qed 
\end{example}

\begin{figure}[ht]
    \centering
    \begin{tikzpicture}[scale=1.3,every node/.style={scale=.7}]
        \node [draw=black,label=below:node 1] (a1) at (1.5,3) {1010000+110111};
        
        \node [draw=black,label=below:node 2] (a2) at (.5,2) {0+101};
        \node [draw=black,label=below:node 3] (a3) at (4.5,2) {101000+1111};
       
        \node [draw=black,label=below:node 4] (a4) at (-.5,1) {0+0};
        \node [draw=black,label=below:node 5] (a5) at (1.5,1) {0+101};
        \node [draw=black,label=below:node 6] (a6) at (3.5,1) {1010+0};
        \node [draw=black,label=below:node 7] (a7) at (5.5,1) {1010+101};
        
        \node [draw=black,label=below:node 8] (a8) at (-1,0) {0};
        \node [draw=black,label=below:node 9] (a9) at (0,0) {0};
        \node [draw=black,label=below:node 10] (a10) at (1,0) {0};
        \node [draw=black,label=below:node 11] (a11) at (2,0) {1};
        \node [draw=black,label=below:node 12] (a12) at (3,0) {1};
        \node [draw=black,label=below:node 13] (a13) at (4,0) {0};
        \node [draw=black,label=below:node 14] (a14) at (5,0) {1};
        \node [draw=black,label=below:node 15] (a15) at (6,0) {1};
        
        \draw[->] (a1) -- (a2); \draw[->] (a1) -- (a3);
        \draw[->] (a2) -- (a4); \draw[->] (a2) -- (a5); \draw[->] (a3) -- (a6); \draw[->] (a3) -- (a7);
        \draw[->] (a4) -- (a8); \draw[->] (a4) -- (a9); \draw[->] (a5) -- (a10); \draw[->] (a5) -- (a11); \draw[->] (a6) -- (a12); \draw[->] (a6) -- (a13); \draw[->] (a7) -- (a14); \draw[->] (a7) -- (a15);
        
        \node at (-2,0) {$shift = 0$}; \node at (-2,1) {$shift = 1$}; \node at (-2,2) {$shift = 2~(10)$}; \node at (-2,3) {$shift = 4~(100)$};
        
        \node at (4,3) {(=10000111)};
    \end{tikzpicture}
    \caption{Multiplication of 00011011 and 0101 in base 2. }
    \label{figure:11011times101}
\end{figure}

\subsubsection{Lattice-Linearity}\label{subsubsection:ll-mul-normal}

\begin{lemma}\label{lemma:mul-normal-predicate-ll}
    Given the input bitstrings $n$ and $m$, the predicate 
    \begin{center}
        $\forall i:\lnot\textsc{\Imped-Multiplication-Standard-Ans}(i)$
    \end{center}
    is lattice-linear on $2|n|-1$ computing nodes.
\end{lemma}

\begin{proof}
    Let us assume that node 1 does not have the correct value of $1[ans]=n\times m$. This implies that (1) node 1 has a non-updated value in $1[ans]$ or $1[shift]$, in which case node 1 is \imped, or (2) node 2 does not have the correct values $2[ans]=n[1:|n|/2]\times m$ or $2[shift]$ or node 3 does not have the correct values $3[ans]=n[|n|/2+1:|n|]\times m$ or $3[shift]$.
    
    Recursively, this can be extended to any node $i$. Let that node $i$ has stored an incorrect value in $i[ans]$ or $i[shift]$. If $i\leq |n|-1$, then this means that (1) node $i$ has a non-updated value in $i[ans]$ or $i[shift]$, in which case node $i$ is \imped, or (2) node $2i$ or node $2i+1$ do not have the correct values in $(2i)[ans]=n[|n|-2(|n|-2i)+1:|n|-2(|n|-2i)+2]$ or $(2i+1)[ans]=n[|n|-2(|n|-2i-1)+1:|n|-2(|n|-2i-1)+2]$ respectively. If $i\geq |n|$, then this implies that node $i$ has not read the correct value $n[i-|n|+1]$, in which case, again, node $i$ is \imped.
    
    From these cases, we have that given a global state $s$, where $s=\langle\langle 1[ans]$, $1[shift]\rangle,$ $\langle 2[ans]$, $2[shift]\rangle$, $...$, $\langle (2|n|-1)[ans]$, $(2|n|-1)[shift]\rangle\rangle$, if $s$ is \imped, there is at least one node which is \imped.

    
    Next, we show that if some node is \imped, then node 1 will not store the correct answer. $\forall i:i\in[1:2|n|-1]$ node $i$ is \imped if it has a non-updated value in $i[ans]$ ($i[ans]=n[|n|-2(|n|-i)+1:|n|-2(|n|-i)+2]\times m$ if $i\leq |n|-1$ and $i[ans]=n[i-|n|+1]$ if $i\geq |n|$) or $i[shift]$. This implies that the parent of node $i$ will also store incorrect value in its $ans$ or $shift$ variable. Recursively, we have that node 1 stores an incorrect value in $1[ans]$, and thus the global state is \imped.
\end{proof}

{
\begin{lemma}\label{lemma:mul-normal-predicate-self-stabilizing}
   The predicate
    \begin{center}
        $\forall i:\lnot\textsc{\Imped-Multiplication-Standard-Ans}(i)$
    \end{center}
    is a lattice-linear self-stabilizing predicate.
\end{lemma}
\begin{proof}
Since the lattice linearity was shown in \Cref{lemma:mul-normal-predicate-ll}, we only focus on the self-stabilizing aspect here. To show this, we need to show that this predicate is true in the supremum state. 

    We note that if $s$ is the supremum of the induced lattice, then there is no outgoing edge from $s$ to any other global state in the state transition graph. It means that in $s$, no node is enabled, and so, no node is \imped. Thus, we have that the predicate 
    \begin{center}
        $\forall i:\lnot\textsc{\Imped-Multiplication-Standard-Ans}(i)$
    \end{center}
    holds true in $s$.
    Since $s$ is an arbitrary supremum, this predicate 
    is a self-stabilizing predicate.
\end{proof}
}

{
\begin{theorem}\label{theorem:algo-mul-normal-algo-ss}
    \Cref{algorithm:mul-normal-tree} is silent and self-stabilizing.
\end{theorem}
\begin{proof}
    
    The nodes that have ID $\geq n$ (leaf nodes) read the bit-values directly from the input (cf. first rule of \Cref{algorithm:mul-normal-tree}), so their value is fixed immediately when they make their first move. 
    Then, these nodes will not change their states. 
    After that, recursively, all other nodes will update their state with respect to the state of their children (cf. second and third rules of \Cref{algorithm:mul-normal-tree}). If the nodes are arbitrarily initialized, then several nodes may need to update their state more than once.
    
    This process will continue for all nodes in the tree, and the nodes will converge to a stable state bottom-to-top, in an acyclic fashion. Therefore, eventually, the root node will correct its own state. At this point, no node is enabled and the value of $ans$ in the root node provides the answer. Thus, \Cref{algorithm:mul-normal-tree} is silent and self-stabilizing.
    
    
    
    
\end{proof}
}



\subsection{Parallelized Karatsuba's Multiplication Operation}\label{subsection:karatsuba}

In this section, we study the lattice-linearity of the parallelization (of Karatsuba's \cite{Karatsuba1962} algorithm) that was presented in \cite{Cesari1996}. First we discuss the idea behind the sequential Karatsuba's algorithm, and then we elaborate on its lattice-linearity.

\subsubsection{Key Idea of the Sequential Karatsuba's Algorithm}\label{sububsection:idea-karatsuba}

The input is a pair of bitstrings $n$ and $m$.
This algorithm is recursive in nature. As the base case, when the length of $n$ and $m$ equals 1 then, the multiplication result is trivial. 
When the length is greater than 1, we let $n=append(a, b)$ and $m=append(c, d)$, where  $a$ and $b$ are half the length of $n$, and $c$ and $d$ are half the length of $m$. Here, $append(a, b)$, for example, represents concatenation of $a$ and $b$, which equals $n$.

Let $z=2^{|b|}$. 
$n\times m$ can be computed as $a\times c\times z^2+(a\times d+b\times c)\times z+b\times d$. $a\times d+b\times c$ can be computed as $(a+b)\times (c+d)-a\times c-b\times d$.
Thus, to compute $n \times m$, it suffices to compute 3 multiplications $a\times c$, $b\times d$ and $(a+b)\times (c+d)$.
Hence, we can eliminate one of the multiplications.
In the following section, we analyse the lattice-linearity of the parallelization of this algorithm as described in \cite{Cesari1996}.

\subsubsection{The CM parallelization \cite{Cesari1996} for Karatsuba's algorithm}

The Karatsuba multiplication algorithm involves dividing the input string into substrings and use them to evaluate the multiplication recursively. In the parallel version of this algorithm, the recursive call is replaced by utilizing other (\textit{children}) nodes to treat those substrings. We elaborate more on this in the following paragraphs. Consequently, 
this algorithm induces a tree among the computing nodes, where every non-leaf node has three children. 
This algorithm works in two phases, top-down and bottom-up. This algorithm uses four variables to represent the state of each node $i$: $i[n]$, $i[m]$, $i[ans]$ and $i[shift]$ respectively.

In the sequential Karatsuba's algorithm, both of the input strings $n$ and $m$ are divided into two substrings each, and the algorithm then runs recurively on three different input pairs computed from those excerpt bitstrings. In the parallel version, those recursive calls are replaced by \textit{activating} three children nodes \cite{Cesari1996}. As a result of such parallelization, if there is no carry-forwarding due to addition, we require $\lg |n|$ levels, for which a total of $|n|^{\lg 3}$ nodes are required. However, if there is carry-forwarding due to additions, then we require $2 \lg |n|$ levels, for which a total of $|n|^{2\lg 3}$ nodes are required.

In the top-down phase, if $|i[m]| > 1$ or $|i[n]|>1$, then $i$ writes (1) $a$ and $c$ to its left child, node $3i-1$ ($(3i-1)[m]=a$ and $(3i-1)[n]=c$), (2) $b$ and $d$ to its middle child, node $3i$ ($(3i)[m]=b$ and $(3i)[n]=d$), and (3) $a+b$ and $c+d$ to its right child, node $3i+1$ ($(3i+1)[m]=a+b$ and $(3i+1)[n]=c+d$). 
If $|i[m]|=|i[n]|=1$, i.e., in the base case, the bottom-up phase begins and node $i$ sets $i[ans]=i[m]\times i[n]$ that can be computed trivially since $|i[m]|=|i[n]|=1$.

In the bottom-up phase, node $i$ sets 
$i[ans] = (3i-1)[ans]\times z^2 + ((3i+1)[ans]-((3i-1)[ans]+(3i)[and]))\times z + (3i)[ans]$.
Notice that multiplication by $z$ and $z^2$ corresponds to bit shifts and does not need an actual multiplication. 
Consequently, the product of $m\times n$ for node $i$ is computed by this algorithm.

With some book-keeping (storing the place values of most significant bits of $a+b$ and $c+d$), a node $i$ only needs to write the rightmost $\frac{|i[m]|}{2}$ and $\frac{|i[n]|}{2}$ bits to its children. Thus, we can safely assume that when a node writes $m$ and $n$ to any of its children, then $m$ and $n$ of that child are of equal length and are of length in some power of 2. (If we do not do the book-keeping, the required number of nodes increases, this number is upper bounded by $|n|^{2\lg 3}$ as the number of levels is upper bounded by $2\lg |n|$; this observation was not made in \cite{Cesari1996}.) However, we do not show such book-keeping in the algorithm for brevity. Thus this algorithm would require $2\lg |n|$ levels, i.e., $|n|^{2\lg 3}$ nodes.\\~

\noindent\textbf{\textit{Computation of \textit{i[shift]}}}:
This algorithm utilizes $shift$ to compute $z$. A node $i$ updates $i[shift]$ by doubling the value of $shift$ from its children. A node $i$ evaluates that it is \imped because of an incorrect value of $i[shift]$ by evaluating the following macro. 
{
$$
\begin{array}{c}
    \textsc{\Imped-Multiplication-Karatsuba-Shift}(i)\equiv\\
    \begin{array}{l}
        \begin{cases}
            |i[m]|=1\land |i[n]|=1\land i[shift]\neq 0~~OR\\
            (3i)[shift]=(3i-1)[shift]=0\leq (3i+1)[shift]\land i[shift]\neq 1~~OR\\
            0<(3i)[shift]=(3i-1)[shift]\leq (3i+1)[shift]\land i[shift]\neq (3i)[shift]\times 2.\\
        \end{cases}
    \end{array}
\end{array}
$$
}

\noindent\textbf{\textit{Computation of \textit{i[m]} and \textit{i[n]}}}:
To ensure that the data flows down correctly, we declare a node $i$ to be \imped as follows.
{
$$\begin{array}{c}
    \textsc{\Imped-Multiplication-Karatsuba-TopDown}(i)\equiv\\
    \begin{array}{l}
        \begin{cases}
            i=1\land (i[m]\neq m\lor i[n]\neq n)~~OR\\
            ((|i[m]|>1\land |i[n]|>1)\land\\
            ((3i-1)[m] \neq i[m]\Big[1:\frac{|i[m]|}{2}\Big]~~OR\\
            (3i-1)[n] \neq i[n]\Big[1:\frac{|i[n]|}{2}\Big]~~OR\\
            (3i)[m] \neq i[m]\Big[\frac{|i[m]|}{2}+1:|i[m]|\Big]~~OR\\
            (3i)[n] \neq i[n]\Big[\frac{|i[n]|}{2}+1:|i[n]|\Big]~~OR\\
            (3i+1)[m]\neq i[m]\Big[1:\frac{|i[m]|}{2}\Big]+i[m]\Big[\frac{|i[m]|}{2}+1:|i[m]|\Big]~~OR\\
            (3i+1)[n]\neq i[n]\Big[1:\frac{|i[n]|}{2}\Big]+i[n]\Big[\frac{|i[n]|}{2}+1:|i[n]|\Big])).\\
        \end{cases}
    \end{array}
\end{array}
$$
}

\noindent\textbf{\textit{Computation of \textit{i[ans]}}}:
To determine if a node $i$ has stored $i[ans]$ incorrectly, it evaluates to be \imped as follows.
{
$$\begin{array}{c}
    \textsc{\Imped-Multiplication-Karatsuba-BottomUp}(i)\equiv\\
    \begin{array}{l}
        \begin{cases}
            |i[m]|=1 \land |i[n]|=1\land i[ans]\neq i[m]\times i[n]~~OR\\
            |i[m]|>1\land |i[n]|>1\land (i[ans]\neq lshift((3i-1)[ans], i[shift])\\
            \quad \quad +lshift((3i+1)[ans]-(3i-1)[ans]-(3i)[ans],(3i)[shift])\\
            \quad \quad +(3i+1)[ans])
        \end{cases}
    \end{array}
\end{array}
$$
}

Thus, \Cref{algorithm:parallel-karatsuba}
 is described as follows:

\begin{algorithm}\label{algorithm:parallel-karatsuba}Parallel processing version of Karatsuba's algorithm.
\end{algorithm}
{
$
\begin{array}{|l|}
    \hline
    \text{\textit{Rules for node $i$}}.\\~\\
    \textsc{\Imped-Multiplication-Karatsuba-Shift}(i)\longrightarrow\\
    \begin{array}{l}
        \begin{cases}
            i[shift]=0 & \text{if $|i[m]|=1\land |i[n]|=1\land i[shift]\neq 0$.}\\
            i[shift]=1 & \text{if $(3i)[shift]=(3i-1)[shift]=0$}\\
             & \quad \quad \text{$\leq (3i+1)[shift]\land i[shift]\neq 1$}\\
            i[shift]=(3i)[shift]\times 2 & \text{otherwise}
        \end{cases}
    \end{array}
    ~\\
    \textsc{\Imped-Multiplication-Karatsuba-TopDown}(i)\longrightarrow\\
    \begin{array}{l}
        \begin{cases}
            i[m]=m,i[n]=n & \text{if $i=1\land (i[m]\neq m\lor i[n]\neq n)$.}\\
            (3i-1)[m] = i[m]\Big[1:\frac{|i[m]|}{2}\Big] & \text{if $(3i-1)[m] \neq i[m]\Big[1:\frac{|i[m]|}{2}\Big]$.}\\
            (3i-1)[n] = i[n]\Big[1:\frac{|i[n]|}{2}\Big] & \text{if $(3i-1)[n] \neq i[n]\Big[1:\frac{|i[n]|}{2}\Big]$.}\\
            (3i)[m] = i[m]\Big[\frac{|i[m]|}{2}+1:|i[m]|\Big] & \text{if $(3i)[m] \neq i[m]\Big[\frac{|i[m]|}{2}+1:|i[m]|\Big]$.}\\
            (3i)[n] = i[n]\Big[\frac{|i[n]|}{2}+1:|i[n]|\Big] & \text{if $(3i)[n] \neq i[n]\Big[\frac{|i[n]|}{2}+1:|i[n]|\Big]$.}\\
            (3i+1)[m]=i[m]\Big[1:\frac{|i[m]|}{2}\Big]\\ \quad +i[m]\Big[\frac{|i[m]|}{2}+1:|i[m]|\Big] & \text{if $(3i+1)[m]\neq i[m]\Big[1:\frac{|i[m]|}{2}\Big]$.}\\
             & \text{\quad \quad $+i[m]\Big[\frac{|i[m]|}{2}+1:|i[m]|\Big]$.}\\
            (3i+1)[n]=i[n]\Big[1:\frac{|i[n]|}{2}\Big]\\ \quad +i[n]\Big[\frac{|i[n]|}{2}+1:|i[n]|\Big] & \text{otherwise}\\
        \end{cases}
    \end{array}
    ~\\
    \textsc{\Imped-Multiplication-Karatsuba-BottomUp}(i)\longrightarrow\\
    \begin{array}{l}
        \begin{cases}
            i[ans]=i[m]\times i[n] & \text{if $|i[m]|=1 \land |i[n]|=1$}\\
            i[ans]=lshift((3i-1)[ans], i[shift])+\\
            \quad \quad lshift((3i+1)[ans]-(3i-1)[ans]\\
            \quad \quad \quad -(3i)[ans],(3i)[shift])+(3i+1)[ans]) & \text{otherwise.}
        \end{cases}
    \end{array}
    ~\\
    \hline
\end{array}
$
}

\Cref{algorithm:parallel-karatsuba} converges in $O(|n|)$ time \cite{Cesari1996}, and its work complexity is $O(n^{\lg 3})$, which is the time complexity of the sequential Karatsuba's algorithm \cite{Cesari1996}. 

\begin{example}\label{example:1111times1111}
    \Cref{figure:100times100} evaluates $100\times 100$ following \Cref{algorithm:parallel-karatsuba}.    
\end{example}
\begin{figure}[ht]
    \centering
    \begin{minipage}{1\textwidth}
        \begin{tikzpicture}[scale=.7,every node/.style={scale=0.65}]
            \node [draw=black,label=below:node 1] (a1) at (0,0) {\begin{tabular}{c}$m=0100$\\$n=0100$\end{tabular}};
            
            \node [draw=black,,label=below:node 2] (a2) at (-6,-2.5) {\begin{tabular}{c}$m=01$\\$n=01$\end{tabular}};
            \node [draw=black,,label=below:node 3] (a3) at (0,-2.5) {\begin{tabular}{c}$m=00$\\$n=00$\end{tabular}};
            \node [draw=black,,label=below:node 4] (a4) at (6,-2.5) {\begin{tabular}{c}$m=01$\\$n=01$\end{tabular}};
            
            \node [draw=black,,label=below:node 5] (a5) at (-8,-5) {\begin{tabular}{c}$m=0$\\$n=0$\end{tabular}};
            \node [draw=black,,label=below:node 6] (a6) at (-6,-5) {\begin{tabular}{c}$m=1$\\$n=1$\end{tabular}};
            \node [draw=black,,label=below:node 7] (a7) at (-4,-5) {\begin{tabular}{c}$m=1$\\$n=1$\end{tabular}};
            \node [draw=black,,label=below:node 8] (a8) at (-2,-5) {\begin{tabular}{c}$m=0$\\$n=0$\end{tabular}};
            \node [draw=black,,label=below:node 9] (a9) at (0,-5) {\begin{tabular}{c}$m=0$\\$n=0$\end{tabular}};
            \node [draw=black,,label=below:node 10] (a10) at (2,-5) {\begin{tabular}{c}$m=0$\\$n=0$\end{tabular}};
            \node [draw=black,,label=below:node 11] (a11) at (4,-5) {\begin{tabular}{c}$m=0$\\$n=0$\end{tabular}};
            \node [draw=black,,label=below:node 12] (a12) at (6,-5) {\begin{tabular}{c}$m=1$\\$n=1$\end{tabular}};
            \node [draw=black,,label=below:node 13] (a13) at (8,-5) {\begin{tabular}{c}$m=1$\\$n=1$\end{tabular}};
            
            \node at (10.5,.5) {$shift=2$};
            \node at (10.5,-2.5) {$shift=1$};
            \node at (10.5,-5.5) {$shift=0$};
            
            \draw (a2) -- (-6,-1.5) -- (0,-1.5); \draw[->] (a3) -- (a1); \draw (a4) -- (6,-1.5) -- (0,-1.5);
            
            \draw (a5) -- (-8,-4) -- (-6,-4); \draw[->] (a6) -- (a2); \draw (a7) -- (-4,-4) -- (-6,-4);
            \draw (a8) -- (-2,-4) -- (0,-4); \draw[->] (a9) -- (a3); \draw (a10) -- (2,-4) -- (0,-4);
            \draw (a11) -- (4,-4) -- (6,-4); \draw[->] (a12) -- (a4); \draw (a13) -- (8,-4) -- (6,-4);
        \end{tikzpicture}
    \end{minipage}
    \textbf{(a)}
    \begin{minipage}{1\textwidth}
        \begin{tikzpicture}[scale=.7,every node/.style={scale=.65}]
            \node [draw=black,label=below:node 1] (a1) at (0,.5) {\begin{tabular}{c}$m=0100$\\$n=0100$\\$ans=10000$\end{tabular}};
            
            \node [draw=black,,label=below:node 2] (a2) at (-6.5,-2.5) {\begin{tabular}{c}$m=01$\\$n=01$\\$ans=1$\end{tabular}};
            \node [draw=black,,label=below:node 3] (a3) at (0,-2.5) {\begin{tabular}{c}$m=00$\\$n=00$\\$ans=0$\end{tabular}};
            \node [draw=black,,label=below:node 4] (a4) at (6.5,-2.5) {\begin{tabular}{c}$m=01$\\$n=01$\\$ans=1$\end{tabular}};
            
            \node [draw=black,,label=below:node 5] (a5) at (-8.5,-5.5) {\begin{tabular}{c}$m=0$\\$n=0$\\$ans=0$\end{tabular}};
            \node [draw=black,,label=below:node 6] (a6) at (-6.5,-5.5) {\begin{tabular}{c}$m=1$\\$n=1$\\$ans=1$\end{tabular}};
            \node [draw=black,,label=below:node 7] (a7) at (-4.5,-5.5) {\begin{tabular}{c}$m=1$\\$n=1$\\$ans=1$\end{tabular}};
            \node [draw=black,,label=below:node 8] (a8) at (-2,-5.5) {\begin{tabular}{c}$m=0$\\$n=0$\\$ans=0$\end{tabular}};
            \node [draw=black,,label=below:node 9] (a9) at (0,-5.5) {\begin{tabular}{c}$m=0$\\$n=0$\\$ans=0$\end{tabular}};
            \node [draw=black,,label=below:node 10] (a10) at (2,-5.5) {\begin{tabular}{c}$m=0$\\$n=0$\\$ans=0$\end{tabular}};
            \node [draw=black,,label=below:node 11] (a11) at (4.5,-5.5) {\begin{tabular}{c}$m=0$\\$n=0$\\$ans=0$\end{tabular}};
            \node [draw=black,,label=below:node 12] (a12) at (6.5,-5.5) {\begin{tabular}{c}$m=1$\\$n=1$\\$ans=1$\end{tabular}};
            \node [draw=black,,label=below:node 13] (a13) at (8.5,-5.5) {\begin{tabular}{c}$m=1$\\$n=1$\\$ans=1$\end{tabular}};
            
            \node at (10.5,.5) {$shift=2$};
            \node at (10.5,-2.5) {$shift=1$};
            \node at (10.5,-5.5) {$shift=0$};
            
            \draw[<-] (a2) -- (-6.5,-1) -- (0,-1); \draw[<-] (a3) -- (a1); \draw[<-] (a4) -- (6.5,-1) -- (0,-1);
            
            \draw[<-] (a5) -- (-8.5,-4) -- (-6.5,-4); \draw[<-] (a6) -- (a2); \draw[<-] (a7) -- (-4.5,-4) -- (-6.5,-4);
            \draw[<-] (a8) -- (-2,-4) -- (0,-4); \draw[<-] (a9) -- (a3); \draw[<-] (a10) -- (2,-4) -- (0,-4);
            \draw[<-] (a11) -- (4.5,-4) -- (6.5,-4); \draw[<-] (a12) -- (a4); \draw[<-] (a13) -- (8.5,-4) -- (6.5,-4);
        \end{tikzpicture}
    \end{minipage}
    \textbf{(b)}
    \caption{Demonstration of multiplication of 100 and 100 in base 2: (a) top down (b) bottom up.
    }
    \label{figure:100times100}
\end{figure}

\subsubsection{Lattice-linearity}\label{subsubsection:ll-karatsuba}

\begin{lemma}\label{lemma:mul-karatsuba}
    Given the input bitstrings $n$ and $m$, the predicate 
    \begin{center}
        $\forall i:\lnot(\textsc{\Imped-Multiplication-Karatsuba-Shift}(i)\lor$\\
        $\textsc{\Imped-Multiplication-Karatsuba-TopDown}(i) \lor$\\
        $\textsc{\Imped-Multiplication-Karatsuba-BottomUp}(i))$
    \end{center}
    is lattice-linear on $|n|^{2\lg 3}$ computing nodes.
\end{lemma}

\begin{proof}
    For the global state to be optimal, in this problem, we require node 1 to store the correct multiplication result in $1[ans]$. To achieve this, each node $i$ must have the correct value stored in $i[n]$ and $i[m]$, and their children must store correct values of $n$, $m$ and $ans$ according to the values of $i[n]$ and $i[m]$. This in turn requires all nodes to store the correct $i[shift]$ values.

    Let us assume 
    for contradiction
    that node 1 does not store the correct value in $1[ans]$ as $n\times m$. This implies that (1) node 1 does not have an updated value in $1[n]$ or $1[m]$, or (2) node 1 has a non-updated value of $1[ans]$, (3) node 1 has not written the updated values to $2[n]$ \& $2[m]$, $3[n]$ \& $3[m]$ or $4[n]$ \& $4[m]$, (4) node 1 has a non-updated value in $1[shift]$, or (5) nodes 2, 3 or 4 have incorrect values in their respective $n$, $m$, $ans$ or $shift$ variables. In cases (1),...,(4), node 1 is \imped.
    
    Recursively, this can be extended to any node $i$. Let node $i$ has stored an incorrect value in $i[ans]$ or $i[shift]$. Let $i>1$. Then (1) node $i$ has a non-updated value in $i[shift]$, $i[ans]$, $i[n]$ or $i[m]$, or (2) if $|i[m]|>1$ or $|i[n]|>1$, node $i$ has not written updated values to $(3i-1)[n]$ \& $(3i-1)[m]$ or $(3i)[n]$ \& $(3i)[m]$ or $(3i+1)[n]$ \& $(3i+1)[m]$, in which case node $i$ is \imped. In both these cases, node $i$ is \imped. It is also possible that at least one of the children of node $i$ has incorrect values in its respective $n$, $m$, $ans$ or $shift$ variables.
    
    From these cases, we have that given a global state $s$, where $s=\langle\langle 1[n],$ $1[m]$, $1[ans]\rangle$, $\langle 2[n]$, $2[m]$, $2[ans]\rangle$, $...$
    $\rangle$, if $s$ is \imped, there is at least one node which is \imped. This shows that if the global state is \imped, then there exists some node $i$ which is \imped.
    
    Next, we show that if some node is \imped, then node 1 will not store the correct answer. Node 1 is \imped if it has not read the correct value $1[m]$ and $1[n]$. Additionally, $\forall i:i\in[1:n^{2\lg 3}]$ node $i$ is \imped if (1) it has non-updated values in $i[ans]$ or $i[shift]$, (2) $i$ has not written the correct values to $(3i-1)[n]$ \& $(3i-1)[m]$ or $(3i)[n]$ \& $(3i)[m]$ or $(3i+1)[n]$ \& $(3i+1)[m]$. This implies that the parent of $i$ will also store incorrect value in its $ans$ or $shift$ variable. Recursively, we have that node 1 stores an incorrect value in $1[ans]$. Thus, the global state is \imped.
    %
\end{proof}

{
With the arguments similar to those made in the proof of \Cref{lemma:mul-normal-predicate-self-stabilizing} and \Cref{theorem:algo-mul-normal-algo-ss}, we have the following.
\begin{lemma}
    The predicate
    \begin{center}
        $\forall i:\lnot(\textsc{\Imped-Multiplication-Karatsuba-Shift}(i)\lor$\\
        $\textsc{\Imped-Multiplication-Karatsuba-TopDown}(i) \lor$\\
        $\textsc{\Imped-Multiplication-Karatsuba-BottomUp}(i))$
    \end{center}
    is a lattice-linear self-stabilizing predicate.
\end{lemma}

\begin{theorem}
    \Cref{algorithm:parallel-karatsuba} is silent and self-stabilizing.
\end{theorem}
}

\noindent\textbf{\textit{Remark}}: 
{In \Cref{algorithm:parallel-karatsuba}, an \imped node updates the state of its children. We have done so for the brevity of the presentation of the algorithm. In practice, each child will notice that its local state does not tally with the state of its parent, and will then update its own state. 
}


\section{Parallel processing modulo operation}\label{section:parallel-modulo}\label{section:mod-parallel}

In this section, we demonstrate parallel processing systems which can be used to compute a given modulo operation $n \mod m$.

\subsection{Using $|$\textit{n}$|$ processors}\label{subsection:mod-n-processors}

In this subsection, we discuss a lattice-linear method to compute modulo operation which requires $|n|$ processors.
First, we discuss the key idea of the sequential algorithm, then we elaborate on the lattice-linearity of its parallelization.

\subsubsection{Key Idea}

This algorithm is based on the standard modulo operation. Suppose that we have computed $a=n[1:|n|-1]\mod m$ and $b=n[|n|]$. Then we have that the resultant value of $n\mod m$ is $(a\times 2+b)\mod m$, which is also equal to $(lshift(a,1)+b)\mod m$.

\subsubsection{Parallelization}

Every node, sequentially, reads a distinct bit of the input dividend $n$. Every node $i$ will eventually store the value of $n[1:i]$ under modulo $m$. The last node, indexed as node $|n|$, will store the final value, i.e. $n\mod m$. We demonstrate two ways of executing this algorithm. One way is with using the machine $M$ that we constructed in \Cref{subsection:dfa-modulo}. Another way is to perform the computation without $M$ where we use $\textsc{Division-Modulo}()$ that we defined in \Cref{subsection:long-division}. We demonstrate these methods in the following.

\paragraph{Using \textit{M}}~\\
In this part, we will utilize $M$ to compute $n\mod m$. Since every node $i$ must store the value of $n[1:i]\mod m$, the \imped node $i$ can be defined as follows.
{
\begin{equation*}
    \textsc{\Imped-Linear-Modulo}(i)\equiv
    \begin{cases}
        i[ans] \neq n[1] & \text{if $i=1$}\\
        (i[ans]\neq M((i-1)[ans],n[i]) & \text{otherwise}
    \end{cases}
\end{equation*}
}
In the definition of $\textsc{\Imped-Linear-Modulo}(i)$, $M(i,j)$ means that $M$ is being invoked with an initial state $q_i$ and an input $j\in\{0,1\}$, i.e. $M(i,j)=\delta(q_i,j)$. If $\delta(q_i,j)=q_k$, then the execution of $M(i,j)$ will give $k$ as output. The algorithm to compute $n\mod m$ is demonstrated in \Cref{algorithm:linear-modulo-with-M}.

\begin{algorithm}\label{algorithm:linear-modulo-with-M} Computing modulo on $|n|$ processors using $M$.
\end{algorithm}
    $$
    \begin{array}{|l|}
        \hline
        \text{Rules for node $i$}.\\~\\
        \begin{array}{c}
            \textsc{\Imped-Linear-Modulo}(i)\longrightarrow\\
            \begin{cases}
                i[ans] = n[1] & \text{if $i=1$}.\\
                i[ans]=M((i-1)[ans],n[i]) & \text{otherwise}.
            \end{cases}
        \end{array}
        ~\\
        \hline
    \end{array}
    $$

The time complexity of this algorithm is $O(|n|)$. 
However, this method needs a preprocessing of $O(m\times |m|)$, which is quite high and impractical, especially if $m$ is large. We present this result only to demonstrate that some pre-processing can reduce the complexity of the modulo operation substantially.

\begin{theorem}
    Given the input bitstrings $n$ and $m$, the predicate 
    \begin{center}
        $\forall i:\lnot\textsc{\Imped-Linear-Modulo}(i)$
    \end{center}
    is lattice-linear on $|n|$ computing nodes.
\end{theorem}

\begin{proof}
    Let us assume that node $|n|$ has incorrect value in $|n|[ans]$. This implies that (1) node $|n|$ does not have an updated value in $(|n|)[ans]$, in which case node $|n|$ is \imped, or (2) node $|n|-1$ has an incorrect value in $(|n|-1)[ans]$.
    
    Recursively, this can be extended to any node $i$. Let that node $i$ has stored an incorrect value in $i[ans]$, then (1) node $i$ has a non-updated value in $i[ans]$, in which case, node $i$ is imedensable, or (2) node $i-1$ has an incorrect value in $(i-1)[ans]$. 
    From these cases, we have that given a global state $s$, where $s=\langle 1[ans], 2[ans],..., |n|[ans]\rangle$, if $s$ is \imped, there is at least one node which is \imped.
    
    This shows that if the global state is \imped, then there exists some node $i$ which is \imped.
    
    Next, we show that if some node is \imped, then node 1 will not store the correct answer. If node $i$ is \imped, then node $i$ has a non-updated value in $i[ans]$. 
    This implies that node $i+1$ will also store incorrect value in $(i+1)[ans]$. Recursively, we have that node $|n|$ stores an incorrect value in $|n|[ans]$, and thus the global state is \imped.
    %
\end{proof}

{
With the arguments similar to those made in the proof of \Cref{lemma:mul-normal-predicate-self-stabilizing} and \Cref{theorem:algo-mul-normal-algo-ss}, we have the following.
\begin{lemma}
    The predicate
    \begin{center}
        $\forall i:\lnot\textsc{\Imped-Linear-Modulo}(i)$
    \end{center}
    is a lattice-linear self-stabilizing predicate.
\end{lemma}

\begin{theorem}
    \Cref{algorithm:linear-modulo-with-M} is silent and self-stabilizing.
\end{theorem}
}

\paragraph{Using long division}~\\
If we utilize $\textsc{Division-Modulo}()$ instead of $M$ in \Cref{algorithm:linear-modulo-with-M} (and, subtraction in place of $M$ in the definition of $\textsc{\Imped-Linear-Modulo}(i)$), then
every node takes $O(|m|)$ time because of the subtraction in \textsc{Division-Modulo}(), which implies that the total work complexity is $O(|n|\times|m|)$. Node $i$ will compute the value of $n[1:i]\mod m$ by the end of time-step $t$.
Therefore, the time complexity of this algorithm is $O(|n|\times|m|)$ to compute $n\mod m$, which is the same as the work complexity of this algorithm.

\subsubsection{Discussion}

The behaviour of these methods is lattice-linear, but similar to a uniprocessor computation, in the sense that if we ran these methods on a uniprocessor machine, then it will take the same order of time. In \Cref{subsection:mod->n-processors}, we present algorithms which exploit the power of a distributed system better. 

\subsection{Using 4 $|$\textit{n}$|$/$|$\textit{m}$|$ $-$ 1 processors}\label{subsection:mod->n-processors}

In this section, we present a parallel processing algorithm to compute $n\mod m$ using $4\times |n|/|m|-1$ computing nodes. 
First, we discuss the key idea of the sequential algorithm, then we elaborate on the lattice-linearity of its parallelization.

\subsubsection{Key Idea}

This algorithm better parallizes the idea discussed in \Cref{subsection:mod-n-processors}. Suppose that we have computed $a=n[1:\lfloor |n|/2\rfloor]\mod m$ and $b=n[\lfloor |n|/2\rfloor+1:|n|]$. Then the resultant value of $n\mod m$ is $(a\times 2^{\lceil |n|/2\rceil}+b)\mod m$, which is also equal to $(lshift(a,\lceil |n|/2\rceil)+b)\mod m$.

\subsubsection{Parallelization}

The algorithm induces a binary tree among the nodes based on their ids; there are $2\times|n|/|m|$ nodes in the lowest level (level 1).
This algorithm starts from the leaves where all leaves compute and store, in sequence, a substring of $n$ of length $|m|/2$ under modulo $m$. In the induced binary tree, the computed modulo result by sibling nodes at level $\ell$ is sent to the parent. Consecutively, those parents at level $\ell+1$, contiguously, store a larger substring of $n$ (double the bits that each of their children covers) under modulo $m$. We elaborate this procedure in this subsection.
This algorithm uses three variables to represent the state of each node $i$: $i[shift]$, $i[pow]$ and $i[ans]$.\\~

\noindent\textbf{Computation of \textit{i[shift]}: } 
The variable $shift$ stores the required power of 2. At any node at level 1, $shift$ is $0$. At level 2, the value of $shift$ at any node is $|m|/2$. At any higher level, the value of $shilft$ is twice the value of shift of its children. 
\textsc{\Imped-Log-Modulo-Shift}, in this context, is defined below. 

{
$$\begin{array}{c}
\textsc{\Imped-Log-Modulo-Shift}(i)\equiv\\
    \begin{cases}
        i[shift]\neq 0 & \text{if $i\geq 2\times|n|/|m|$}\\
        i[shift]\neq |m|/2 & \text{if $(2i)[shift]=(2i+1)[shift]=0$}\\
        i[shift]\neq2\times (2i)[shift] & \text{if $(2i)[shift]=(2i+1)[shift]\geq |m|/2$}
    \end{cases}
\end{array}
$$
}

\noindent\textbf{Computation of \textit{i[pow]}: } 
The goal of this computation is to set $i[pow]$ to be  $2^{i[shift]} \mod m$, whenever the level of $i$ is greater than 1. This can be implemented using the following definition for \textsc{\Imped-Log-Modulo-Pow}. 

$$
\begin{array}{l}
\textsc{\Imped-Log-Modulo-Pow}(i)\equiv\\
    \begin{cases}
        i[pow]\neq 1 & \text{if $i[shift]=0$}\\
        i[pow]\neq 2^{\frac{|m|}{2}} & \text{if $i[shift]=|m|/2$}\\
        i[pow]\neq((2i)[pow])^2\mod m & \text{otherwise}
    \end{cases}
\end{array}
$$

By definition, $i[pow]$ is less than $m$. Also, computation of $pow$ requires multiplication of two numbers that are upper-bounded by $|m|$. Hence, this computation can benefit from parallelization of \Cref{algorithm:parallel-karatsuba}. However, as we will see later, the complexity of this algorithm (for modulo) is dominated by the modulo operation happening in individual nodes which is $O(|m|^2)$, we can use the sequential version of Karatsuba's algorithm for multiplication, without affecting the order of the time complexity of this algorithm. 

\noindent\textbf{Computation of \textit{i[ans]}: }
We split $n$ into strings of size $\frac{|m|}{2}$, the number representing this substring is less than $m$. At the lowest level (level 1), $i[ans]$ is set to the corresponding substring. 
At higher levels, $i[ans]$ is set to $(i[pow]\times (2i)[ans] + (2i+1)[ans]) \mod m$. This computation also involves multiplication of two numbers whose size is upper bounded by $|m|$. An \imped node $i$ from a non-updated $i[ans]$ can be evaluated using \textsc{\Imped-Log-Modulo-Ans}$(i)$.

{
$$\begin{array}{c}
\textsc{\Imped-Log-Modulo-Ans}(i)\equiv\\
    \begin{cases}
        i[ans]\neq n[(i-2\times\dfrac{|n|}{|m|})\times \dfrac{|m|}{2}+1:(i-2\times\dfrac{|n|}{|m|}+1)\times \dfrac{|m|}{2}] & \text{if $i[shift]=0$}\\
        i[ans]\neq\textsc{Mod}(\textsc{Sum}(\textsc{Mul}((2i)[ans],i[pow]),(2i+1)[ans]),m) & \text{otherwise}
    \end{cases}
\end{array}
$$
}

We describe the algorithm as \Cref{algorithm:log-modulo}.

\begin{algorithm}\label{algorithm:log-modulo} Modulo computation by inducing a tree among the nodes.
\end{algorithm}
$\begin{array}{|l|}
        \hline 
        \text{\textit{Rules for node} $i$.}\\~\\
        \textsc{\Imped-Log-Modulo-Shift}(i)\longrightarrow\\
        \begin{array}{ll}
            \begin{cases}
                    i[shift]= 0 & \text{if $i\geq 2\times|n|/|m|$}\\
                i[shift]= \dfrac{|m|}{2} & \text{if $(2i)[shift]=(2i+1)[shift]=0$}\\
                i[shift]= 2\times (2i)[shift] & \text{if $(2i)[shift]=(2i+1)[shift]\geq |m|/2$}
            \end{cases}
        \end{array}
        ~\\
        \textsc{\Imped-Log-Modulo-Pow}(i)\longrightarrow\\
        \begin{array}{ll}
            \begin{cases}
                i[pow]= 1 & \text{if $i[shift]=0$}\\
                i[pow]= 2^{\frac{|m|}{2}} & \text{if $i[shift]=|m|/2$}\\
                i[pow]= \textsc{Mod}(\textsc{Mul}(i[pow],i[pow]),m) & \text{otherwise}
            \end{cases}
        \end{array}
        ~\\
        \textsc{\Imped-Log-Modulo-Ans}(i)\longrightarrow\\
        \begin{array}{ll}
            \begin{cases}
                i[ans]= n[(i-2\times\dfrac{|n|}{|m|})\times \dfrac{|m|}{2}+1:(i-2\times\dfrac{|n|}{|m|}+1)\times \dfrac{|m|}{2}] & \text{if $i[shift]=0$}\\
                i[ans]= \textsc{Mod}(\textsc{Sum}(\textsc{Mul}((2i)[ans],i[pow]),(2i+1)[ans]),m) & \text{otherwise}
            \end{cases}
        \end{array}
        ~\\
        \hline 
    \end{array}
$

\begin{example}
    \Cref{figure:11011mod11} shows the computation of $11011\mod 11$ as performed by \Cref{algorithm:log-modulo}.
    
\end{example}
\begin{figure}[ht]
    \centering
    \begin{tikzpicture}[scale=1.1,every node/.style={scale=.65}]
        \node [draw=black,label=below:{node 1}] (a1) at (3.5,3) {1+10 mod 11};
        
        \node [draw=black,label=below:{node 2}] (a2) at (1.5,2) {1 mod 11};
        \node [draw=black,label=below:{node 3}] (a3) at (5.5,2) {10+0 mod 11};
        
        \node [draw=black,label=below:{node 4}] (a4) at (.5,1) {00 mod 11};
        \node [draw=black,label=below:{node 5}] (a5) at (2.5,1) {01 mod 11};
        \node [draw=black,label=below:{node 6}] (a6) at (4.5,1) {10 mod 11};
        \node [draw=black,label=below:{node 7}] (a7) at (6.5,1) {11 mod 11};
        
        \node [draw=black,label=below:{node 8}] (a8) at (0,0) {0};
        \node [draw=black,label=below:{node 9}] (a9) at (1,0) {0};
        \node [draw=black,label=below:{node 10}] (a10) at (2,0) {0};
        \node [draw=black,label=below:{node 11}] (a11) at (3,0) {1};
        \node [draw=black,label=below:{node 12}] (a12) at (4,0) {1};
        \node [draw=black,label=below:{node 13}] (a13) at (5,0) {0};
        \node [draw=black,label=below:{node 14}] (a14) at (6,0) {1};
        \node [draw=black,label=below:{node 15}] (a15) at (7,0) {1};
        
        \draw[->] (a1) -- (a2); \draw[->] (a1) -- (a3);
        \draw[->] (a2) -- (a4); \draw[->] (a2) -- (a5); \draw[->] (a3) -- (a6); \draw[->] (a3) -- (a7);
        \draw[->] (a4) -- (a8); \draw[->] (a4) -- (a9); \draw[->] (a5) -- (a10); \draw[->] (a5) -- (a11); \draw[->] (a6) -- (a12); \draw[->] (a6) -- (a13); \draw[->] (a7) -- (a14); \draw[->] (a7) -- (a15);
        
        \node at (-2.5,0) {$pow = 1$};
        \node at (-2.5,1) {$pow = 10$};
        \node at (-2.5,2) {$pow = 100\mod 11~(=1)$}; \node at (-2.5,3) {$pow = 10000\mod 11~(=1)$};
        
        \node at (6,3) {(=0)};
    \end{tikzpicture}
    \caption{Processing $11011 \mod 11$ following \Cref{algorithm:log-modulo}.}
    \label{figure:11011mod11}
\end{figure}

\subsubsection{Lattice-Linearity}

\begin{theorem}\label{theorem:mod-log}
    Given the input bitstrings $n$ and $m$, the predicate 
    \begin{center}
        $\forall i:\lnot(\textsc{\Imped-Log-Modulo-Shift}(i)\lor$\\
        $\textsc{\Imped-Log-Modulo-Pow}(i)\lor$\\
        $\textsc{\Imped-Log-Modulo-Ans}(i))$
    \end{center}
    is lattice-linear on $4|n|/|m|-1$ computing nodes.
\end{theorem}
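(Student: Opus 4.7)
The plan is to verify the lattice-linearity condition of \Cref{definition:ll-problem} directly: given any global state $s$ with $\lnot \mathcal{P}(s)$, I will exhibit a node $i$ that is forbidden in the sense of \Cref{definition:forbidden-node}, namely every $s' > s$ with $s'[i]=s[i]$ still violates $\mathcal{P}$. Since $\mathcal{P}$ is the conjunction over all $i$ of the negations of the three \textsc{Forbidden-Log-Modulo} macros, the hypothesis $\lnot \mathcal{P}(s)$ immediately supplies some node $i$ satisfying at least one of them; the task is to show such an $i$ also meets the universal clause of $\textsc{Forbidden}$. The structure will parallel the preceding Karatsuba theorem, with the ternary tree replaced by the binary tree of $4|n|/|m|-1$ nodes induced by \Cref{algorithm:log-modulo}.

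First I would fix the underlying $<$-lattice. Each node's local state is the triple $\langle shift.i,pow.i,ans.i\rangle$, and the total order $<_l$ is chosen so that the unique correct triple at $i$ --- the one forced by $n$, $m$, and the values at $i$'s children in the fully computed global state --- sits strictly above every other triple. Under this choice the supremum of the induced $<$-lattice is exactly the all-correct assignment, matching the self-stabilization remark earlier in the paper, and the clause $s'[i]=s[i]$ in \Cref{definition:forbidden-node} forces $i$ to keep the same incorrect triple throughout the comparison.

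Next I would do a three-way case analysis on which macro flags $i$. For \textsc{Forbidden-Log-Modulo-Shift}, because $shift$ at a parent is a deterministic function of $shift$ at its children, a wrong $shift.i$ pins every ancestor's $shift$, and hence every ancestor's $pow$, away from its correct value, so no $s'>s$ with $s'[i]=s[i]$ can restore $\mathcal{P}$. The same propagation handles \textsc{Forbidden-Log-Modulo-Pow} through the iterated squarings $pow.i=pow.(2i)^2\bmod m$, and \textsc{Forbidden-Log-Modulo-Ans} through the parent rule $ans.\lfloor i/2\rfloor=\textsc{Mod}(\textsc{Sum}(\textsc{Mul}(ans.i,pow.\lfloor i/2\rfloor),ans.(\mathrm{sibling}(i))),m)$, which cannot yield the specification value at the root while $i$'s wrong value is pinned. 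At a leaf ($i\geq 2|n|/|m|$) the rules pin $shift.i=0$, $pow.i=1$, and $ans.i$ to an explicit substring of $n$, so any leaf deviation already produces a forbidden node.

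The main obstacle I expect is pinning down $<_l$ coherently across the three dependent variables within each node. One has to ensure that for each of $shift$, $pow$, and $ans$ the correct value is strictly above every incorrect one in $<_l$, and that the intra-node dependencies ($pow$ depends on $shift$, $ans$ depends on $pow$) do not leave room for some $s'>s$ that flips another node's variable and, while keeping $s'[i]=s[i]$, accidentally makes $\mathcal{P}$ true. Once this lattice is specified consistently, the three case analyses above close the argument, and the thread count $4|n|/|m|-1$ is simply the number of nodes in the induced binary tree.
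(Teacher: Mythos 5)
Your overall strategy is essentially the paper's: both arguments propagate (in)consistency along the binary tree induced by the algorithm, with the correct triple placed at the top of each node's local order so that the supremum of the $<$-lattice is the fully computed state, and both rely on the fact that a globally unflagged state forces, by induction from the leaves upward, the unique correct value at every node.

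There is, however, one step that would fail as written. You take the node $i$ supplied directly by $\lnot\mathcal{P}(s)$ and claim it satisfies the universal clause of $\textsc{Forbidden}$. That is not true of an \emph{arbitrary} flagged node: a node can be flagged solely because its \emph{children} hold wrong values while its own triple is already the globally correct one. For instance, let $i$ be a level-2 node with $shift.i=|m|/2$ (correct) whose two leaf children both hold $shift=|m|$ (wrong). The third clause of \textsc{Forbidden-Log-Modulo-Shift} fires at $i$ since $shift.i\neq 2\times shift.(2i)$; but in a state $s'>s$ where the two leaves have risen to their correct value $0$ and $s'[i]=s[i]$, the second clause is satisfied by $shift.i=|m|/2$, $i$ is no longer flagged, and $\mathcal{P}(s')$ can hold --- so this $i$ is not forbidden. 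Your own closing paragraph flags exactly this "accidental satisfaction" worry but attributes it to the choice of $<_l$; no choice of $<_l$ fixes it. The repair is the choice of \emph{node}: descend to a deepest flagged node (one with no flagged descendants). Its descendants then already carry the unique correct values, hence cannot move up in the lattice, so its own pinned triple is genuinely incorrect and the same macro remains true in every $s'>s$ with $s'[i]=s[i]$. This descent is what the paper's recursive case analysis is implicitly doing; with that selection, your three-way case analysis and your definition of $<_l$ close the argument.
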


\begin{proof}
    For the global state to be optimal, under this algorithm, we require node 1 to store the correct modulo result in $1[ans]$. To achieve this, each node $i$ must have the correct value of $i[ans]$. This in turn requires all nodes to store the correct $i[shift]$ and $i[pow]$ values.

    Let us assume
    for contradiction
    that node 1 has incorrect value in $1[ans]$. This implies that (1) node 1 has a non-updated value in $i[shift]$ or $i[pow]$, or (2) node 1 does not have an updated value in $i[ans]$. In both these cases, node 1 is \imped. It is also possible that node 2 or node 3 have an incorrect value in their variables.
    
    Recursively, this can be extended to any node $i$. Let node $i$ has stored an incorrect value in $i[ans]$. If $i<2(|n|/|m|)$, then (1) node $i$ has a non-updated value in $i[ans]$, $i[pow]$ or $i[shift]$, in which case node $i$ is \imped or (2) node $2i$ or node $2i+1$ have incorrect value in their respective $shift$, $pow$ or $ans$ variables. 
    If $i\geq 2(|n|/|m|)$, then $i$ does not have values $i[shift]=0$, $i[pow]=1$ or a correct $i[ans]$ value, in which case $i$ is \imped.
    From these cases, we have that given a global state $s$, where $s=\langle\langle 1[shift]$, $1[pow]$, $1[ans]\rangle$, $\langle 2[shift]$, $2[pow]$, $2[ans]\rangle$, $...$, $\langle (4|n|/|m|-1)[shift]$, $(4|n|/|m|-1)[pow]$, $(4|n|/|m|-1)[ans]\rangle\rangle$, if $s$ is \imped, there is at least one node which is \imped.
    
    This shows that if the global state is \imped, then there exists some node $i$ which is \imped.
    
    Next, we show that if some node is \imped, then node 1 will not store the correct answer.
    $\forall i:i\in[1:4|n|/|m|-1]$ node $i$ is \imped if it has non-updated values in $i[ans]$, $i[pow]$ or $i[shift]$. This implies that the parent of node $i$ also stores incorrect value in its $ans$ variable. Recursively, we have that node 1 stores an incorrect value in $1[ans]$, and thus the global state is \imped.
    %
\end{proof}

{
With the arguments similar to those made in the proof of \Cref{lemma:mul-normal-predicate-self-stabilizing} and \Cref{theorem:algo-mul-normal-algo-ss}, we have the following.
\begin{lemma}
    The predicate
    \begin{center}
        $\forall i:\lnot(\textsc{\Imped-Log-Modulo-Shift}(i)\lor$\\
        $\textsc{\Imped-Log-Modulo-Pow}(i)\lor$\\
        $\textsc{\Imped-Log-Modulo-Ans}(i))$
    \end{center}
    is a lattice-linear self-stabilizing predicate.
\end{lemma}

\begin{theorem}
    \Cref{algorithm:log-modulo} is silent and self-stabilizing.
\end{theorem}
}


\subsubsection{Time complexity analysis}

\Cref{algorithm:log-modulo} is a general algorithm that uses the \textsc{Mod}( \textsc{Mul}($\cdots$)) and \textsc{Mod}( \textsc{Sum}($\cdots$)).
For some given $x,y$ and $z$ values, \textsc{Mod}(\textsc{Mul}($x,y$),$z$) (resp., \textsc{Mod} (\textsc{Sum}($x,y$),$z$)) involves first the multiplication (resp., addition) of two input values $x$ and $y$ and then evaluating the resulting value under modulo $z$.
These functions can be implemented in different ways. Choices for these implementations affect the time complexity. 
We consider the following approaches.

\paragraph{Modulo via Long Division}~\\
First, we consider the standard approach for computing \textsc{Mod}(\textsc{Mul}($\cdots$)) and \textsc{Mod}(\textsc{Sum}($\cdots$)). Observe that in \Cref{algorithm:log-modulo}, if we compute  \textsc{Mod}(\textsc{Mul}$(x,y)$) then $x,y < m$. Hence, we can use Karatsuba's parallelized algorithm from \Cref{section:mul-parallel}, where both the input numbers are less than $m$. Using the analysis from \Cref{section:mul-parallel}, we have that each multiplication operation has a time complexity of $O(|m|)$.

Subsequently, to compute the mod operation, we need to compute $xy \mod m$ where $xy$ is upto $2|m|$ digits long. Using the standard approach of long division, we will need $|m|$ iterations where in each iteration, we need to do a subtraction operation with numbers that are $|m|$ digits long. Hence, the complexity of this approach is $O(|m|^2)$ per modulo operation. Since this complexity is higher than the cost of multiplication, the overall time complexity is $O(|m|^2\times \lg \dfrac{|n|}{|m|})$. 

\paragraph{Modulo by using $M$}~\\
The previous approach used $m$ and $n$ as inputs. Next, we consider the case where $m$ is hardcoded in the algorithm. As discussed in \Cref{section:mod-sequential}, we observe that these problems occur in practice. Our analysis is intended to provide lower bounds on the complexity of the modulo operation when $m$ is hardcoded. Similar to \Cref{subsection:dfa-modulo}, the pre-processing required in these algorithms makes them impractical in practice. However, we present them to show that there is a potential to reduce the complexity by some pre-processing. 

We can use \Cref{algorithm:parallel-karatsuba} for multiplication; each multiplication operation will have a time complexity of $O(|m|)$.
Subsequently, to compute the mod operation, we need to compute $xy \mod m$ where $xy$ is upto $2|m|$ digits long.  Using $M$, we will need $2m$ iterations; each iteration takes a constant amount of time. Hence, the complexity of this approach is $O(|m|)$ per modulo operation. Since this complexity is higher than the cost of multiplication, the overall time complexity is $O(|m|\times \lg \dfrac{|n|}{|m|})$. 

\paragraph{Modulo by Constructing Transition Functions}~\\
In this part, we again consider the case where $m$ is hardcoded.
%
%
If $m$ is fixed, we can create a table $\delta_{sum}$ of size $m\times m$ where an entry at location $(i,j)$ represents $i+j \mod m$ in $O(m^2)$ time. Using $\delta_{sum}$, we can create another transition function $\delta_{mul}$ of size $m\times m$ where an entry at location $(i,j)$ represents $i\times j \mod m$ in $O(m^2)$ time. 

Using a preprocessed $\delta_{mul}$, the time complexity of a \textsc{Mod}(\textsc{Mul}($\cdots$)) operation becomes $O(1)$. As a result, the overall complexity of the modulo operation becomes $O(\lg \dfrac{|n|}{|m|})$. The pre-processing required in this method also is high. However, the effective time complexity of the modulo operation is reduced even more, as compared to the method that uses $M$, which is discussed above.

\section{Discussion on Common Properties of These Algorithms}\label{section:discussion}

In this section, we look at some common properties that are present in the problems and algorithms discussed in the preceding sections. Effectively, we also provide an alternate visualization to the abstraction of the lattices induced by the algorithms present in this article.

\subsection{Data Dependency Among Nodes}

In \Cref{subsubsection:ll-karatsuba}, for example, we showed how \Cref{algorithm:parallel-karatsuba} is lattice-linear by showing that given any suboptimal global state, we can point out specific nodes that are \imped. Any \imped node $i$ has only one choice of action, which implies that a total order is induced among all the local states that $i$ can visit. Such a total order, induced among the local states of every node, gives rise to the induction of a lattice among the global states.

Let that \textit{source} of the variable $i[var]$ in a node $i$ be the node that $i$ depends on to evaluate $i[var]$. For example, under \Cref{algorithm:parallel-karatsuba}, node $i$ depends on node $3i-1$, node $3i$ and node $3i+1$ to evaluate $i[ans]$. Thus the source nodes for node $i$ with respect to the evaluation of $i[ans]$ are node $3i-1$, node $3i$ and node $3i+1$. Similarly the source nodes of $i$ with respect to $i[m]$ or $i[n]$ is node $\Big\lfloor\dfrac{i+1}{3}\Big\rfloor$.

Let that $\textsc{Source}(i,var)$ is the set of nodes that are the source of $i$ with respect to $var$. Thus, under \Cref{algorithm:parallel-karatsuba}, for example, $\textsc{Source}(i,ans)=\{3i-1,3i,3i+1\}$. $\textsc{Source}(i,m)=\Big\{\Big\lfloor\dfrac{i+1}{3}\Big\rfloor\Big\}$. However, $\textsc{Source}(1,m)=\phi$ because $i$ is receiving $m$ as part of the input. Similarly, for all other algorithms, we can define the source nodes for all the nodes with respect to any given variable.

Let $\textsc{Variables}(i)$ be the set of the names of all variables of node $i$. We use a macro $\textsc{Depends}(i)$; we have the following recursive definition for this macro.
\begin{center}
    \begin{tabular}{|l|}
        \hline 
        $\textsc{Depends}(i)\supseteq \bigcup\limits_{var\in \textsc{Variables}(i)}\textsc{Source}(i,var)$.\\
        $\textsc{Depends}(i)\supseteq \bigcup\limits_{j\in\textsc{Depends}(i), var\in \textsc{Variables}(j)}\textsc{Source}(j,var)$.\\
        \hline 
    \end{tabular}
\end{center}

\subsection{Induction of $\prec$-lattice}

In \Cref{algorithm:parallel-karatsuba}, for $m$ and $n$, the information of $m$ and $n$ for $i$ is set based on the values of the parent of $i$. Hence, the $\textsc{Depends}(i)$ will contain all the ancestor nodes of $i$ in the tree. In addition, the $ans$ variable of $i$ is based on the children of $i$. Hence $\textsc{Depends}(i)$ will (also) contain the descendants of $i$ in the tree. For example, in \Cref{figure:100times100}, $\textsc{Depends}(3)=\{1,8,9,10\}$ and $\textsc{Depends}(8)=\{1,3\}$.

Let $\textsc{Is-Bad}(i,var)$ be true if and only if $i$ is impedensable with respect to some variable $var$, i.e., based on the values of the variables in $\textsc{Source}(i,var)$, $i$ has not computed $var$ correctly yet. We define state value of a node $i$ in a global state $s$ as follows. All the macros are also computed in the same global state $s$.
\begin{center}
    $\begin{array}{l}
        \textsc{State-Value}(i,s)=\\
        |\{var|var\in\textsc{Variables}(i):\textsc{Is-Bad}(i,var)\}|\\
        + |\{var|var\in \textsc{Variables}(j): j\in\textsc{Depends}(i,var): \textsc{Is-Bad}(j,var)\}|
    \end{array}$
\end{center}

We define the rank of a global state $s$ as follows.

\begin{equation*}
    \textsc{Rank}(s)=\sum\limits_{\text{each node $i$}}\textsc{State-Value}(i,s).
\end{equation*}


From the perspective of, for example, \Cref{algorithm:parallel-karatsuba}, a total order is induced among the local state visited by a node; $\textsc{State-Value}(i)$ describes the badness of the local state of a node $i$, which decreases monotonously as the nodes execute under \Cref{algorithm:parallel-karatsuba}.
Similarly, for all other algorithms, a total order is defined similarly using $\textsc{State-Value}(i)$.

As a consequence of the total order that is defined by $\textsc{State-Value}(i)$, a lattice among the global states can be observed with respect to the rank of the system; if the rank of a state $s$ is nonzero, then there is some node $i$ that is \imped in $s$. Let that only node $i$ changes its state and as a consequence, $s$ transitions to state $s'$. Then, we have that $s\prec s'$ where $s[i]\prec s'[i]$. This forms a $\prec$-lattice among the global states where $s[i]\prec s'[i]$ iff $\textsc{State-Value}(i,s)>\textsc{State-Value}(i,s')$ and $s\prec s'$ iff $\textsc{Rank}(s)>\textsc{Rank}(s')$. 
Rank is 0 at the supremum of the lattice, which is the optimal state.

From the above observation, we have that the system is able to converge from an arbitrary state to the required state within the expected number of time steps. This allows providing new inputs to a parallel processing system without needing to refresh variables of the nodes.

A problem is lattice-linear if it can be modelled in such a way that an \imped node must change its state in order for the system to reach the optimal state \cite{Garg2020}. 
From the above discussion, we have the following theorem about multiplication and modulo operations.


\section{Related Work}\label{section:literature}

\noindent\textbf{Lattice-linear problems}: The notion of representing problems through a predicate under which the states form a lattice was introduced in \cite{Garg2020}. We call the problems for which such a representation is possible \textit{lattice-linear problems}. Lattice-linear problems were also studied in \cite{Garg2021,Garg2022}, where the authors have studied lattice-linearity in, respectively, housing market problem and several dynamic programming problems. 
Many of these problems do not allow self-stabilization.

{
\noindent\textbf{Non-lattice-linear problems}: Certain problems are non-lattice-linear problems. In such problems, there are instances in which the \imped nodes cannot be distinctly determined, i.e., in those instances
$\exists s :\lnot\mathcal{P}(s) \land (\forall i: \exists s': \mathcal{P}(s')\land s[i]=s'[i]$).
Minimal dominating set (MDS) and several other graph theoretic problems are examples of such problems. 
This can be illustrated through a simple instance of a 2 node connected network with nodes $A$ and $B$, initially both in the dominating set. Here, MDS can be obtained without removing $A$. Thus, $A$ is not \imped. On the same hand, MDS can be obtained without removing $B$, and thus, $B$ is also not \imped.

For non-lattice-linear problems, $\prec$-lattices can induced algorithmically as an \imped node cannot be distinctly determined naturally. Eventually lattice-linear algorithms (introduced in \cite{Gupta2021}) and fully lattice-linear algorithms (introduced in \cite{Gupta2022,Gupta2023a}) were developed for many such problems.

The discussion of non-lattice-linear problems is out of the scope of this paper. However, we discuss it briefly, above, for completeness.
}

\noindent\textbf{Snap-stabilization}: The notion of snap-stabilization was introduced in \cite{Bui1999}.
The algorithms that we study in this article make the system follow a sequence of states that is deterministically predictable because of the underlying lattice structure in the state space. Thus, these algorithms exhibit snap-stabilization. In general, all self-stabilizing algorithms where lattice-linearity is utilized are snap-stabilizing.

\noindent\textbf{Modulo}: In \cite{Zeugmann1992}, the authors have presented parallel processing algorithms for inverse, discrete roots, or a large power modulo a number that has only small prime factors. A hardware circuit implementation for mod is presented in \cite{Butler2011}.

In this article, we study several parallel processing algorithms which are self-stabilizing. Some of these algorithms require critical preprocessing, and some do not. The general algorithm for modulo (\Cref{algorithm:log-modulo}) utilizes the power of (sequential or parallel) modulo and multiplication operations on smaller operands.

\noindent\textbf{Multiplication}: In \cite{Cesari1996}, the authors presented three parallel implementations of the Karatsuba algorithm for long integer multiplication on a distributed memory architecture. Two of the implementations have time complexity of $O(n)$ on $n^{\lg 3}$ processors. The third algorithm has complexity $O(n \lg n)$ on $n$ processors.

In this article, we study lattice-linearity of the first algorithm presented in the above-mentioned article.

\section{Conclusion}\label{section:conclusion}

The contribution of this article is two-fold, one is applicative and the other is mathematical. First, we show that the parallelization of multiplication and modulo is lattice-linear. Due to lattice-linearity, we have that the algorithms we study in this article are tolerant to asynchrony. Second, we show two different lattice structures, for both multiplication and modulo, which guarantee convergence in asynchronous environments. This is the first article that shows that a lattice-linear problem can be solved under two different lattice structures. Specifically, considering (any) one of these problems in both the lattice structures, (1) the numbers of nodes are different, so the size of the global states in both the lattice structures is different, and (2) the numbers of children that a node has are different.

Multiplication and Modulo are among the fundamental mathematical operations. Fast parallel processing algorithms for such operations reduce the execution time of the applications which they are employed in. In this article, we showed that these problems are lattice-linear.
In this context, we studied parallelization of the standard multiplication and a parallelization of Karatsuba's algorithm. In addition, we studied parallel processing algorithms for the modulo operation. 

The presence of lattice-linearity in problems and algorithms allows nodes to execute asynchronously. This is specifically valuable in parallel algorithms where synchronization can be removed as is. 
These algorithms are snap-stabilizing, which means that the state transitions of the system strictly follow its specification.
They are also self-stabilizing, i.e., the supremum states in the lattices induced under the respective predicates are the optimal states. 

Utilizing these algorithms, the available cluster or GPU power can be used to compute the multiplication and modulo operations on big-number inputs. In this case, a synchronization primitive also does not need to be deployed. Also, the circuit does not need to be refreshed before providing it with a new input. This is also very fruitful, for example, in Karatsuba's multiplication the time that it would take to refresh the circuit is $O(|n|^{2\lg 3})$, but even without refreshing, we obtain the final answer in $O(n)$ time. This shows the gravity of the utility of the self-stabilizing property of these algorithms. Thus a plethora of applications will benefit from the observations presented in this article.



\bibliography{modmul.bib}
\bibliographystyle{acm}

\end{document}